 \newtheorem{thm}{Theorem}[section]
 \newtheorem{lem}[thm]{Lemma}
 \newtheorem{prop}[thm]{Proposition}
 \theoremstyle{definition}
 \theoremstyle{remark}
 \numberwithin{equation}{section}
\newcommand\CH{{\mathcal H}}
\newcommand{\R}{\mathbb{R}}
\newcommand{\C}{\mathbb{C}}
\newcommand\CD{{\mathcal D}}
\newcommand{\domS}{\mathcal{D}}
\newcommand{\dom}{\mathop{\rm dom}}
\newcommand{\Ran}{\mathop{\rm ran}}
\newcommand{\sqt}{{\mathfrak{t}}}
\newcommand{\Div}{\rm div}
\begin{document}

\title[Spectra of Gurtin-Pipkin type of integro-differential equations]{Spectra of Gurtin-Pipkin type of integro-differential equations and applications to waves in graded viscoelastic structures}

\author{Christian Engstr\"om} 
\address{Department of Mathematics, Linnaeus University, Sweden}
\email{christian.engstrom@lnu.se}

\subjclass{47A56,47A12,45K05}

\keywords{Viscoelastic, Non-linear spectral problem, Numerical range}

\date{}
\dedicatory{}

\begin{abstract}
In this paper, we study spectral properties and spectral enclosures for the Gurtin-Pipkin type of integro-differential equations in several dimensions. The analysis is based on an operator function and we consider the relation between the studied operator function and other formulations of the spectral problem. The theory is applied to wave equations with Boltzmann damping.
\end{abstract}

\maketitle
\section{Introduction}

Integro-differential equations with unbounded operator coefficients are used to describe viscoelasticity, heat transfer with finite propagation speed,
systems with thermal memory, and acoustic waves in composite media \cite{MR1553521,MR1036731,MR2861663}.

In this paper, we consider spectral properties and spectral enclosures of the second-order Gurtin-Pipkin type of integro-differential equations
\begin{equation}\label{wave}
u_{tt}(t)+T_a u(t)+\int_{0}^{t}K_t(t-s)T_b u(s)  ds=h,\quad  K(t)=\sum_{j=1}^{N}a_j e^{-b_j t}
\end{equation}
where $T_a$ and $T_b$ are unbounded self-adjoint operators with compact resolvents in a separable Hilbert space $\CH$. Assume that the constants in the exponential kernel $K$ satisfy $a_j>0$ and $0<b_1<b_2<\cdots<b_N$. After applying the Laplace transform $\hat f(\lambda)=\int_0^{\infty}f(t)e^{-\lambda t}\,dt$ to \eqref{wave} with homogeneous initial conditions, we formally obtain the symbol of \eqref{wave}, which for $\lambda\in\domS:=\C\setminus\{-b_1,-b_2,\dots,-b_N\}$ takes the form
\begin{equation}\label{eq:T}
 T(\lambda)=\lambda^2+T_a-T_b \hat K(\lambda),\quad \hat K (\lambda)=\sum_{j=1}^{N} \frac{a_j b_j}{\lambda+b_j}.
\end{equation}
In the spectral analysis of \eqref{eq:T} we need to deal with the following three difficulties: (i) the essential spectrum is non-trivial, (ii) the domain of $T$ depend on $\lambda$, and (iii) the spectrum is not real. Spectral properties of other classes of rational operator functions have been studied in e.g. \cite{MR1354980,MR1853352,MR1912418,Tretter2008,MR3543766,MR3802490}. 

The point spectrum of \eqref{eq:T} and solvability of \eqref{wave} when $T_b$ is a positive multiple of $T_a$ were studied extensively; see \cite{MR2861663,MR3963333} and the references therein. Those papers derive results for general kernels $K$, using methods from analytic function theory. 

An alternative approach is to base the analysis on the \emph{system operator} \cite{MR281400}. Spectral properties of the system operator that corresponds to \eqref{wave} were in \cite{MR3041669} considered in the space $L^2(0,1)$ with $T_a=a\frac{d}{dx}$, $T_b=b\frac{d}{dx}$ for positive constants $a$ and $b$. A closely related one-dimensional wave equation was studied in \cite{MR3109859}. They derived explicit expressions on the essential spectrum, studied the asymptotics of the eigenvalues, and the Riesz basis property. In all these studies is $T_b=bT_a$ for some positive constant $b$, which implies that \eqref{eq:T} only depends on one unbounded operator. However, there are many applications with two unbounded operator coefficients in \eqref{eq:T}, e.g. contact problems where one material is elastic with no dissipation and one material is viscoelastic \cite{MR1886854}. Moreover, interest for functionally graded materials in elastic structures is growing. Those materials are designed by varying the microstructure from one material to another material with a specific gradient. The gradient is then, on a macroscopic scale, modeled by a smooth function. Under some geometric conditions, it has been shown that the solution of the wave equation with a graded Boltzmann damping decays exponentially to zero \cite{MR1848535}. However, there are to my knowledge no spectral analysis results for this class of problems.

In this paper, we derive spectral results for the operator function \eqref{eq:T}. The results for graded materials include an explicit formula for the essential spectrum and spectral enclosures for the eigenvalues. We consider also the case when $T_b=bT_a$ for some positive constant $b$ and apply the results to wave equations in several space variables with Boltzmann damping.


Throughout this article $\mathcal{L}(\CH_1,\CH_2)$ and $\mathcal{B}(\CH_1,\CH_2)$ denote the collection of linear and bounded operators between the Hilbert spaces $\CH_1$ and $\CH_2$, respectively. For convenience we use the notation $\mathcal{L}(\CH_1):=\mathcal{L}(\CH_1,\CH_1)$. Assume that the operator function $T:\CD\rightarrow\mathcal{L}(\CH)$ with the (in general) $\lambda$-dependent domain $\dom T$ is closed. Then, the spectrum of an operator function $T$ is defined as
\[
\sigma (T)=\{\lambda\in\domS \,:\, 0\in\sigma (T(\lambda)) \}.
\] 
The essential spectrum $\sigma_{\text{ess}} (A)$ of a closed operator $A$ is defined as the set of all $\lambda\in\C$ such that $A-\lambda$ is not a Fredholm operator. Hence, the essential spectrum of the operator function $T$ is the set
\[
	\sigma_{\text{ess}} (T)=\{\lambda\in\domS \,:\, 0\in\sigma_{\text{ess}} (T(\lambda)) \}.
\] 
Let $\CD_T, \CD_{\hat T}\subset \C$ and consider the operator functions
\[
T:\CD_T\rightarrow \mathcal{L}(\CH_1,\CH_2),
\quad  \hat T:\CD_{\hat T}\rightarrow \mathcal{L}(\hat \CH_1,\hat \CH_2),
\]
with domains $\dom T$ and $\dom \hat T$, respectively. Then $T$ and $\hat T$ are called \emph{equivalent} on $\CD\subset \CD_T\cap \CD_{\hat T}$ \cite{MR0482317,ET2} if there exists invertible operator functions 
\[
E:\CD\rightarrow \mathcal{B}(\hat \CH_2,\CH_2),
\quad  F:\CD\rightarrow \mathcal{B}(\CH_1,\hat \CH_1)
\]
such that
\[
T(\lambda)=E(\lambda)\hat T(\lambda) F(\lambda),\quad \dom T(\lambda)=F(\lambda)^{-1}\dom \hat T(\lambda).
\]
Furthermore,  $T$ and $\hat T$ are called \emph{equivalent after operator extension} \cite{ET2} if
\[
\begin{bmatrix}
T(\lambda) & \\
0 & W_T(\lambda)
\end{bmatrix}
=
\hat E(\lambda)
\begin{bmatrix}
\hat T(\lambda) & \\
0 & W_{\hat{T}}(\lambda)
\end{bmatrix}
\hat F(\lambda)                  
\]
for some bounded operator functions $\hat E$ and $\hat F$ with bounded inverse. Equivalent operators share many properties. In particular, $T(\lambda)$ is closed (closable) if and only if $\hat T(\lambda)$ is closed (closable). Let $\overline T$ and $\overline{\hat{T}}$ denote the closure of two operator functions that are equivalent on $\CD$. Then, if we restrict the functions to $\CD$, the spectrum is preserved:
\[
\sigma_p(\overline T)=\sigma_p(\overline{\hat{T}}),\quad  \sigma_c(\overline T)=\sigma_c(\overline{\hat{T}}),\quad \sigma_r(\overline T)=\sigma_r(\overline{\hat{T}}),
\]
where $\sigma_p$, $\sigma_c$, and $\sigma_r$ denote the point, continuous, and residual spectrum, respectively.

Assume that the operator function 
$\tilde T:\domS\rightarrow\mathcal{L}(\CH\oplus \hat \CH)$ can be represented as
\[
\tilde T(\lambda)=\begin{bmatrix}
A(\lambda) & B(\lambda)\\
C(\lambda) & D(\lambda)
\end{bmatrix}.
\]
The domain of $\tilde T$ is, unless otherwise stated, the \emph{natural domain} of the operator function, $\dom \tilde T(\lambda)=\dom A(\lambda)\cap\dom C(\lambda)\oplus\dom B(\lambda)\cap\dom D(\lambda)$.

\section{The spectrum and numerical range of $T$}

In this section, we prove theorems that in the application part of the paper are used to study wave equations with graded material properties and Boltzmann damping. For that reason, we introduce in the theorem below a continuous function $b$ on a bounded and sufficiently smooth domain $\Omega\subset\R^n$. This function corresponds in the application to a graded material. Moreover, $1-b(x)\hat K(0)>0$ for any $x\in\overline{\Omega}$ in the studied physical problem and we will therefore restrict the analysis to functions that satisfy the inequality.

The operator function $T:\CD\rightarrow\mathcal{L}(\CH)$ in \eqref{eq:T} is always defined on the maximal domain $\domS:=\C\setminus\{-b_1,-b_2,\dots,-b_N\}$. 
\begin{thm}\label{th:essspec}
	Assume that $L:\CH\rightarrow \CH$ is a $T_a$-compact operator, where $T_a$ is a positive self-adjoint operator and $T_a^{-1}$ is compact. Furthermore, \eqref{eq:T} is a closed operator function in $\CH$, where the self-adjoint operator $T_b$ can be written in the form $T_b=bT_a+L$ for some $b\in C(\bar\Omega)$. Assume $0\leq b_{\text{min}}\leq b(x) \leq b_{\text{max}}$, for $x\in\overline{\Omega}$ and $1>b_{\text{max}}\sum_{j=1}^N a_j$ and let  $f(\lambda,x):=1-b(x)\hat K(\lambda)$. Then
	\begin{description}
		\item[(i)] $\sigma_{\text{ess}}(T)\subset (-b_N,0)$ and
		\[
			\sigma_{\text{ess}}(T)=\overline{\{\lambda\in\C\,:\, f(\lambda,x)=0\,\, \text{for some}\,\, x\in\bar\Omega\}},
		\]
		\item[(ii)] $\sigma_r (T)=\emptyset$.
	\end{description}
\end{thm}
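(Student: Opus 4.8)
The plan is to establish (i) by reducing the Fredholm analysis of $T(\lambda)$, at each fixed $\lambda\in\domS$, to that of a multiplication operator, and to establish (ii) from the symmetry $T(\lambda)^*=T(\bar\lambda)$. First I would substitute $T_b=bT_a+L$ into \eqref{eq:T}; since the scalar $\hat K(\lambda)$ commutes with the operators, this rewrites $T(\lambda)$ as
\[
T(\lambda)=\lambda^2+(I-\hat K(\lambda)b)T_a-\hat K(\lambda)L=M_{f(\lambda,\cdot)}T_a+\lambda^2-\hat K(\lambda)L,
\]
where $M_{f(\lambda,\cdot)}$ is multiplication by $f(\lambda,\cdot)$ and all terms share the domain $\dom T_a$ (because $L$ is $T_a$-bounded and $b$ is bounded). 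Viewing $T(\lambda)$ as a bounded map from $(\dom T_a,\|\cdot\|_{T_a})$ into $\CH$, the summand $\hat K(\lambda)L$ is compact for each fixed $\lambda$ — this is exactly the $T_a$-compactness of $L$ — so $T(\lambda)$ is Fredholm if and only if $T_0(\lambda):=M_{f(\lambda,\cdot)}T_a+\lambda^2$ is. Because $T_a$ is boundedly invertible with compact inverse, $T_a^{-1}$ is an isomorphism of $\CH$ onto $(\dom T_a,\|\cdot\|_{T_a})$, and right composition with an isomorphism preserves Fredholmness; hence $T_0(\lambda)$ is Fredholm iff $T_0(\lambda)T_a^{-1}=M_{f(\lambda,\cdot)}+\lambda^2 T_a^{-1}$ is Fredholm on $\CH$. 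Since $\lambda^2 T_a^{-1}$ is compact, everything reduces to the Fredholm property of $M_{f(\lambda,\cdot)}$.

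Next I would use that a multiplication operator by a continuous symbol on $\CH=L^2(\Omega)$ fails to be Fredholm precisely when $0$ belongs to its essential range, and the essential range of a function continuous on $\overline\Omega$ is $f(\lambda,\overline\Omega)$. This yields $\sigma_{\text{ess}}(T)=\{\lambda\in\domS:\ f(\lambda,x)=0\ \text{for some}\ x\in\overline\Omega\}$. I would then note that this set is relatively closed in $\domS$, being the image under projection to the $\lambda$-variable of the closed set $\{f=0\}\subset\domS\times\overline\Omega$ with compact fibre $\overline\Omega$; this accounts for the closure in the statement, modulo the excluded poles $-b_j$.

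To localize the spectrum I would argue that $f(\lambda,x)=0$ forces $\hat K(\lambda)=1/b(x)$, a positive real. Since $\operatorname{Im}\hat K(\lambda)=-(\operatorname{Im}\lambda)\sum_j a_jb_j/|\lambda+b_j|^2$ vanishes only for real $\lambda$, every such $\lambda$ is real. On the real line $\hat K$ is strictly decreasing between consecutive poles; the hypothesis $1>b_{\text{max}}\sum_j a_j$ gives $1/b(x)\ge 1/b_{\text{max}}>\sum_j a_j=\hat K(0)$, which rules out $\lambda\ge 0$ (there $\hat K(\lambda)<\hat K(0)$), while $\hat K(\lambda)<0<1/b(x)$ for $\lambda<-b_N$ rules out $\lambda\le -b_N$. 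Hence every zero lies in $(-b_N,0)$, giving $\sigma_{\text{ess}}(T)\subset(-b_N,0)$.

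For (ii), self-adjointness of $T_a,T_b$ together with the real coefficients of $\hat K$ give $T(\lambda)^*=\bar\lambda^2+T_a-\hat K(\bar\lambda)T_b=T(\bar\lambda)$. If $0\in\sigma_r(T(\lambda))$ then $T(\lambda)$ is injective while $\overline{\Ran T(\lambda)}^{\perp}=\Ker T(\lambda)^*=\Ker T(\bar\lambda)\neq\{0\}$. Applying the complex conjugation $J$ on $L^2(\Omega)$, which commutes with the real operators $T_a,T_b$ and conjugates scalars, a nonzero $v\in\Ker T(\bar\lambda)$ yields $Jv\in\Ker T(\lambda)$, contradicting injectivity; thus $\sigma_r(T)=\emptyset$. \textbf{Main obstacle.} The delicate point is the converse half of (i) — that the vanishing of $f$ genuinely destroys the Fredholm property, not merely invertibility. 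The reduction through the isomorphism $T_a^{-1}$ is what makes this transparent, so the real work lies in justifying that this right-multiplication and the relatively compact perturbation by $L$ preserve Fredholmness as maps between the graph-norm space and $\CH$, in identifying the essential range of the continuous multiplier, and in the careful treatment of the closure near the excluded poles.
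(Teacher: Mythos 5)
Your treatment of part (i) is essentially the paper's argument. The paper conjugates symmetrically, forming $\hat T(\lambda)=T_a^{-1/2}T(\lambda)T_a^{-1/2}=\lambda^2T_a^{-1}+f-\hat K(\lambda)T_a^{-1/2}LT_a^{-1/2}$, and reduces everything to multiplication by $f(\lambda,\cdot)$ modulo the compact operators $T_a^{-1}$ and $T_a^{-1/2}LT_a^{-1/2}$ — exactly what your one-sided composition with the isomorphism $T_a^{-1}$ achieves. For the converse direction (zeros of $f$ destroy Fredholmness) the paper constructs an explicit singular sequence of normalized functions supported in shrinking neighbourhoods of a point $x_0$ with $f(\lambda,x_0)=0$, where you invoke the standard criterion that a multiplication operator with continuous symbol is Fredholm iff $0$ avoids its essential range; both are sound, and your version handles the two directions at once. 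Your localization of the zeros via $\operatorname{Im}\hat K(\lambda)=-(\operatorname{Im}\lambda)\sum_j a_jb_j/|\lambda+b_j|^2$ differs from the paper's count (the paper notes that $f(\cdot,x_0)\prod_j(\lambda+b_j)$ has degree $N$ and exhibits $N$ real roots, one in each interval between consecutive poles and one in $(-b_1,0)$, so no non-real roots remain), but both arguments are correct and give $\sigma_{\text{ess}}(T)\subset(-b_N,0)$. Your remark on the closure is a sensible supplement the paper omits.

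The one genuine gap is in part (ii). Your argument needs $JT_aJ=T_a$ and $JT_bJ=T_b$ for the complex conjugation $J$ on $L^2(\Omega)$, i.e. that $T_a$ and $T_b$ are \emph{real} operators. This is not among the hypotheses: the theorem assumes only self-adjointness, and for instance a magnetic Schr\"odinger operator $T_a=(-i\nabla-A)^2$ is positive and self-adjoint with compact resolvent but does not commute with $J$, so the step "$v\in\Ker T(\bar\lambda)$ implies $Jv\in\Ker T(\lambda)$" fails. The paper's route does not need realness: having shown in (i) that all zeros of $f(\cdot,x)$ are real, for non-real $\lambda$ the reduced operator is a compact perturbation of the boundedly invertible $M_{f(\lambda,\cdot)}$, hence Fredholm of index zero, so injectivity forces surjectivity and $\sigma_r(T)\subset\R$; and for real $\lambda$ the operator $T(\lambda)$ is self-adjoint, so $\overline{\Ran T(\lambda)}^{\perp}=\Ker T(\lambda)^*=\Ker T(\lambda)$ and an injective $T(\lambda)$ automatically has dense range. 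Replacing your $J$-argument by this index-zero/self-adjointness argument closes the gap without any additional assumption.
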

\begin{proof}
	(i) Define the operator function $\hat T:\CD\rightarrow\mathcal{B}(\CH)$ as
	\[
	\hat T(\lambda):=T^{-1/2}_aT(\lambda)T^{-1/2}_a=\lambda^2T^{-1}_a+f-\hat K(\lambda)T^{-1/2}_aLT^{-1/2}_a.
	\]
	From the assumption that $L$ is $T_a$-compact it follows that $T^{-1/2}_aLT^{-1/2}_a$ is compact. Hence
	\[
	\sigma_{\text{ess}}(T)=\sigma_{\text{ess}}(\hat T),
	\]
	where $\hat T(\lambda)$ is a Fredholm operator if $f(\lambda,x)$ is non-zero for $x\in\overline{\Omega}$.  Note that $f(\lambda,x_0)\prod_{j=1}^N (\lambda+b_j)$ for a fixed $x_0\in\overline{\Omega}$ is an $N$-th degree polynomial in $\lambda$. Hence, $f(\lambda,x_0)=0$ has at most $N$ solutions. By assumption is $f(0,x_0)>0$, $f(\lambda,x_0)\rightarrow+\infty$, $\lambda\rightarrow -b_j^-$, and $f(\lambda,x_0)\rightarrow-\infty$, $\lambda\rightarrow -b_j^+$. Hence, $f(\lambda,x_0)=0$ has $N$ real solutions.
	Take positive constants $\epsilon$, $\delta$ and $x_0\in\bar\Omega$ such that $f(\lambda,x_0)=0$ and $|f(\lambda,x_0)|<\epsilon$ for all $x\in\Omega_\delta$ with $\Omega_\delta:=\{x\in\overline\Omega\,:\, |x-x_0|<\delta\}$. Let $u_\delta$ denote a smooth function with support in $\Omega_\delta$ and $\|u_\delta\|=1$. Take a weakly convergent sub-sequence  $\{u_{\delta'}\}$ of the bounded sequence  $\{u_{\delta}\}$. Then
	\[
	\|f(\lambda,\cdot)u_{\delta'}\|\leq \epsilon.
	\]
 Hence, $\lambda\in	\sigma_{\text{ess}}(T)\subset (-b_N,0)$. 
 
 (ii) Take $\lambda\in\sigma_r(T)$ and set  $V:=\overline{\Ran T(\lambda)}$. Since, $T$ is self-adjoint it follows from the projection theorem that
 \[
 \{0\}\neq V^{\perp}=\text{Ker}\, T(\lambda)^*=\text{Ker}\, T(\bar\lambda).
 \]
 Hence, $\bar\lambda$ is an eigenvalue of $T$, but (i) implies that $\lambda\in\sigma_r(T)$ is real. The claim follows since $\lambda\in \sigma_r(T)$ is not an eigenvalue.
\end{proof}	

The concept of Jordan chains can be formulated for general bounded analytic operator functions \cite{MR971506}. We study an unbounded operator function but the spectral problem can also as in Theorem \ref{th:essspec} be formulated in terms of a bounded operator function. Let $\hat T(\lambda_0)u_0=0$. The vectors $u_1, u_2,\dots u_{m-1}$ are associated with the eigenvector $u_0$ if
\[
\sum_{k=0}^{j}\frac{1}{k!}\hat T^{(k)}(\lambda_0)u_{j-k}=0,\quad j=1,2,\dots,m-1.
\]
The sequence $\{u_j\}_{j=0}^{m-1}$ is called a Jordan chain of length $m$ and the maximal length of a chain of an eigenvector and associated vectors is called the multiplicity of the eigenvector \cite{MR971506}.
\begin{lem}
	Assume that $T_b=bT_a$ and $\lambda_0\in\sigma_p (T)\cap\R$, where
	\begin{equation}\label{eq:Jordan}
	\frac{2}{\lambda_0}[(bu_0,u_0)\hat K(\lambda_0)-1]-(bu_0,u_0)\hat K'(\lambda_0)\neq 0.	
	\end{equation}
	Then, the length of the corresponding Jordan chain is one.
\end{lem}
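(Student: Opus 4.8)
The plan is to carry the whole argument out inside the bounded operator function $\hat T$ introduced above, for which the Jordan chain is defined, rather than inside the unbounded $T$. Since $T_b=bT_a$ we have $L=0$, so
\[
\hat T(\lambda)=\lambda^2 T_a^{-1}+f(\lambda),\qquad f(\lambda)=1-b\hat K(\lambda),
\]
and for a real $\lambda_0$ this is a bounded \emph{self-adjoint} operator, because $b$ and $\hat K(\lambda_0)$ are real and $T_a^{-1}$ is self-adjoint. By the definition of associated vectors, the $j=1$ equation reads $\hat T(\lambda_0)u_1+\hat T'(\lambda_0)u_0=0$, so the Jordan chain of $u_0$ has length greater than one precisely when $\hat T(\lambda_0)u_1=-\hat T'(\lambda_0)u_0$ admits a solution $u_1$. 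I would therefore prove the lemma by showing that \eqref{eq:Jordan} obstructs this solvability and forces the length to be one; throughout I normalize $\|u_0\|=1$.

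The main step is a Fredholm-type argument. Since \eqref{eq:Jordan} presupposes $\lambda_0\neq 0$, the eigenvalue relation $\lambda_0^2 T_a^{-1}u_0=-f(\lambda_0)u_0$ exhibits $-f(\lambda_0)/\lambda_0^2$ as an eigenvalue of the positive operator $T_a^{-1}$, so $f(\lambda_0)<0$; in particular $f(\lambda_0)\neq 0$. Consequently $\hat T(\lambda_0)=f(\lambda_0)I+\lambda_0^2 T_a^{-1}$ is a compact perturbation of the invertible operator $f(\lambda_0)I$, hence Fredholm of index zero with closed range, and by self-adjointness $\Ran\hat T(\lambda_0)=(\Ker\hat T(\lambda_0))^{\perp}$. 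Thus the equation $\hat T(\lambda_0)u_1=-\hat T'(\lambda_0)u_0$ is solvable only if $\hat T'(\lambda_0)u_0\perp\Ker\hat T(\lambda_0)$, and in particular only if $(\hat T'(\lambda_0)u_0,u_0)=0$, since $u_0\in\Ker\hat T(\lambda_0)$. This reduction of solvability to a single scalar orthogonality condition is the conceptual heart of the proof; verifying the closed range (equivalently $f(\lambda_0)\neq 0$) is the point that needs care.

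It then remains to identify this scalar with the left-hand side of \eqref{eq:Jordan}, which is a routine computation. Differentiating gives $\hat T'(\lambda_0)=2\lambda_0 T_a^{-1}-b\hat K'(\lambda_0)$, so that
\[
(\hat T'(\lambda_0)u_0,u_0)=2\lambda_0\,(T_a^{-1}u_0,u_0)-(bu_0,u_0)\hat K'(\lambda_0).
\]
Taking the inner product of $\lambda_0^2 T_a^{-1}u_0=-f(\lambda_0)u_0$ with $u_0$ and using $\|u_0\|=1$ yields $\lambda_0^2(T_a^{-1}u_0,u_0)=(bu_0,u_0)\hat K(\lambda_0)-1$; substituting $(T_a^{-1}u_0,u_0)$ back shows that $(\hat T'(\lambda_0)u_0,u_0)$ equals exactly the expression in \eqref{eq:Jordan}. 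Since that expression is nonzero by assumption, the equation $\hat T(\lambda_0)u_1=-\hat T'(\lambda_0)u_0$ has no solution, no associated vector $u_1$ exists, and the Jordan chain of $u_0$ has length one.
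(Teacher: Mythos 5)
Your proof is correct and rests on the same core idea as the paper's: for real $\lambda_0$ the bounded operator $\hat T(\lambda_0)$ is self-adjoint, so the existence of an associated vector $u_1$ forces $(\hat T'(\lambda_0)u_0,u_0)=0$, and the eigenvalue equation identifies this scalar with the left-hand side of \eqref{eq:Jordan}. Two remarks. First, the Fredholm/closed-range step that you single out as ``the point that needs care'' is superfluous: for the direction you actually use (solvability of $\hat T(\lambda_0)u_1=-\hat T'(\lambda_0)u_0$ implies orthogonality to the kernel) it suffices that $\Ran\hat T(\lambda_0)\subset(\Ker\hat T(\lambda_0))^{\perp}$, which holds for every bounded self-adjoint operator irrespective of whether the range is closed; equivalently, the paper's one-line computation $(\hat T'(\lambda_0)u_0,u_0)=-(\hat T(\lambda_0)u_1,u_0)=-(u_1,\hat T(\lambda_0)u_0)=0$. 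This matters because your justification that $f(\lambda_0)\neq 0$ reads $f(\lambda_0)$ as a scalar, which presumes $b$ constant; the lemma (note the pairing $(bu_0,u_0)$ and the remark following the proof) is meant to cover graded $b$, in which case $f(\lambda_0,\cdot)$ is a multiplication operator that may vanish somewhere on $\bar\Omega$ and the Fredholm claim would need a separate argument. Deleting that step removes the issue without affecting the rest. Second, your explicit verification that $(\hat T'(\lambda_0)u_0,u_0)$ equals the expression in \eqref{eq:Jordan} is a genuine addition: the paper asserts the equivalence of the two conditions without carrying out the computation.
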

\begin{proof}
	Assume that $(\hat T(\lambda_0) u_0,u_0)=0$ where $\|u_0\|=1$. Then  $(\hat T'(\lambda_0)u_0,u_0)\neq 0$ if and only if \eqref{eq:Jordan} holds and the claim follows from \cite[Lemma 30.13]{MR971506}. The argument in \cite{MR971506} is as follows. Assume that $u_1$ is an associated vector to $u_0$ and let $(\cdot,\cdot)$ denote the inner product in $\CH$. Then $\hat T(\lambda_0)u_1=-\hat T'(\lambda_0)u_0$ and
	\[
	(\hat T'(\lambda_0)u_0,u_0)=-(\hat T(\lambda_0)u_1,u_0)=-(u_1,\hat T(\lambda_0)u_0)=0,
	\] 
	where we used that $\hat T(\lambda_0)^*=\hat T(\lambda_0)$ for real $\lambda_0$.
\end{proof}
Note that $(bu_0,u_0)=b$ when $b$ is constant, which simplifies the condition \eqref{eq:Jordan}.

A frequently studied case is when $T_b=bT_a$ for some positive constant $b$. The operator function can then be written in the form
\[
	T(\lambda)=\lambda^2+[1-b\hat K(\lambda)]T_a.
\]
Let $\dom T(\lambda)=\CH$ if $1-b\hat K(\lambda)=0$ and $\dom T(\lambda)=\dom T_a$ otherwise. Assume $1-b\hat K(\lambda)\neq 0$ and set
\begin{equation}\label{eq:G}
	G(\lambda):=-\frac{\lambda^2}{1-b\hat K(\lambda)}.
\end{equation}
Then $\lambda\in\sigma (T)$ if and only if $G(\lambda)\in\sigma (T_a)$ for some $\lambda\in\domS$.
\subsection{The enclosure of the numerical range}

The numerical range of  $T:\CD\rightarrow\mathcal{L}(\CH)$ is the set 
\[
W(T)=\{\lambda\in\mathcal{D} :\,\exists u\in\dom T\setminus \{0\}, \|u\|=1, \text{so that}\, (T(\lambda) u,u)=0\},
\]
where $(\cdot,\cdot)$ and $\|\cdot\|$ denote the inner product and norm in $\CH$, respectively. We assume that $T$ satisfies the assumptions in Theorem \ref{th:essspec} with $T_b=bT_a$. Hence $0\not\in\overline {W(\hat T(0))}$, which implies $\sigma (T)\subset \overline {W(T)}$ \cite[Theorem 26.6]{MR971506}.

Properties of the numerical range, including the number of components, are of fundamental importance, but it is difficult to study this set directly. Therefore, we considered in \cite{ET1} a systematic approach to derive a computable enclosure of $W$ for a particular class of rational operator functions. In this paper, we use a similar idea to derive an enclosure of $\overline{W(T)}$, but we incorporate the special structure with the two (in general) unbounded operator coefficients $T_a$ and $T_b$.
Take $u\in\dom T$, with $\|u\|=1$ and set
\[
\alpha_u=(T_a u,u),\quad \beta_u=(T_b u,u),
\]
where $b_{\text{min}}\alpha_u\leq \beta_u\leq b_{\text{max}}\alpha_u$. Hence, $\lambda\in W(T)$ if it exists a normalized vector $u\in\dom T\setminus \{0\}$ such that
\[
 \lambda^2+\alpha_u-\beta_u \hat{K}(\lambda)=0.
\]
The enclosure $W_{\alpha,\beta}(T)$ of $\overline{W(T)}$ is then defined as
\[
W_{\alpha,\beta}(T):=\overline{\{\lambda\in\domS\, :\, \lambda^2+\alpha-\beta\hat{K}(\lambda)=0, \alpha\in W(T_a),\, \beta\in[b_{\text{min}}\alpha,b_{\text{max}}\alpha]\}}.
\]
Note that $\alpha$ only depend on $W(T_a)$, which in many cases is easy to find.
\begin{lem}\label{lem:ess}
	Assume that $T$ satisfies the assumptions in Theorem \ref{th:essspec} with $T_b=bT_a$ and set
	\[
	c_0=\min_{\lambda\in\R}\left \{g(\lambda)\geq W_{\text{min}}(T_a)\, \text{for some}\,\, \hat b\right\},\, c_1=\max_{\lambda\in\R}\left \{g(\lambda)\geq W_{\text{min}}(T_a)\, \text{for some}\, \hat b\right\},
	\]
	where
	\begin{equation}\label{eq:g}
		g(\lambda):=-\frac{\lambda^2}{1-\hat b\sum_{j=1}^N\frac{a_j b_j}{\lambda+b_j}},\quad \hat b\in [b_{\text{min}},b_{\text{max}}].
	\end{equation}
	Then
	\[
	\sigma_{\text{ess}} (T)\subset  W_{\alpha,\beta}(T)\cap \R\subset [c_0,c_1]\subset (-b_N,0].
	\]
	Moreover, $W_{\alpha,\beta}(T)$ is symmetric with respect to $\R$.
\end{lem}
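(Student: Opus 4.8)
The plan is to reduce the whole statement to an analysis of the scalar function $g$ in \eqref{eq:g}. First I would record that, since $T_a$ is positive and self-adjoint with compact resolvent, its numerical range is the closed half-line $W(T_a)=[W_{\text{min}}(T_a),\infty)$ with $W_{\text{min}}(T_a)=\min\sigma(T_a)>0$. For real $\lambda\in\domS$ the defining relation $\lambda^2+\alpha-\beta\hat K(\lambda)=0$ with $\beta=\hat b\alpha$ is equivalent to $\alpha=-\lambda^2/(1-\hat b\hat K(\lambda))=g(\lambda)$, so $\lambda$ contributes a real point of $W_{\alpha,\beta}(T)$ exactly when $g(\lambda)\in W(T_a)$, i.e.\ $g(\lambda)\geq W_{\text{min}}(T_a)$, for some $\hat b\in[b_{\text{min}},b_{\text{max}}]$. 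Writing $S$ for this set of $\lambda$, one checks that $W_{\alpha,\beta}(T)\cap\R=\overline S$: the only way a complex solution can accumulate on $\R$ is along a sequence with $\alpha\to\infty$ and $\hat b\hat K(\lambda)\to1$, which forces the limit to be a zero of $1-\hat b\hat K$, hence an essential-spectrum point that already lies in $\overline S$ (see the next paragraph). The middle inclusion $W_{\alpha,\beta}(T)\cap\R\subset[c_0,c_1]$ is then immediate from the definitions $c_0=\min S$, $c_1=\max S$.

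For $\sigma_{\text{ess}}(T)\subset W_{\alpha,\beta}(T)\cap\R$ I would start from Theorem \ref{th:essspec}: any $\lambda_0\in\sigma_{\text{ess}}(T)$ is real, lies in $(-b_N,0)$, and satisfies $1-\hat b\hat K(\lambda_0)=0$ for $\hat b:=b(x_0)\in[b_{\text{min}},b_{\text{max}}]$. The key computation is that on each interval between consecutive poles the denominator $\lambda\mapsto 1-\hat b\hat K(\lambda)$ is strictly increasing, its derivative being $\hat b\sum_j a_jb_j/(\lambda+b_j)^2>0$. Hence it vanishes simply at $\lambda_0$, is negative just to the left, and, since $\lambda_0\neq0$, $g(\lambda)\to+\infty$ as $\lambda\to\lambda_0^-$. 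Thus a left neighborhood of $\lambda_0$ lies in $S$, and passing to the closure gives $\lambda_0\in\overline S=W_{\alpha,\beta}(T)\cap\R$.

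The inclusion $[c_0,c_1]\subset(-b_N,0]$ requires a sign analysis of $g$. For $\lambda\geq0$ the kernel is monotone with $\hat K(\lambda)\leq\hat K(0)=\sum_j a_j$, so $\hat b\hat K(\lambda)\leq b_{\text{max}}\sum_j a_j<1$, the denominator is positive and $g(\lambda)\leq0<W_{\text{min}}(T_a)$; hence $S\cap[0,\infty)=\emptyset$ and $c_1\leq0$. On $(-\infty,-b_N)$ every term of $\hat K$ is negative, so the denominator exceeds $1$ and again $g<0$; thus $S\subset(-b_N,0)$. To obtain the strict bound $c_0>-b_N$ I would use that $\hat K(\lambda)\to+\infty$ as $\lambda\to-b_N^+$, whereas at any point with $g(\lambda)=W_{\text{min}}(T_a)$ one has $\hat b\hat K(\lambda)=1+\lambda^2/W_{\text{min}}(T_a)\leq 1+b_N^2/W_{\text{min}}(T_a)$; since $\hat b\geq b_{\text{min}}$ this keeps $\hat K(\lambda)$ bounded and forces $\lambda$ a fixed distance to the right of $-b_N$, uniformly in $\hat b$. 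Finally, symmetry of $W_{\alpha,\beta}(T)$ about $\R$ follows because $\hat K$ has real coefficients, so $\hat K(\bar\lambda)=\overline{\hat K(\lambda)}$; with $\alpha,\beta$ real, $\lambda$ solves $\lambda^2+\alpha-\beta\hat K(\lambda)=0$ if and only if $\bar\lambda$ does.

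The step I expect to be the main obstacle is the strict left bound $c_0>-b_N$: it is the only place where the behavior at the pole $-b_N$ interacts with the range of $\hat b$, and the argument genuinely needs $b_{\text{min}}>0$. If $b_{\text{min}}=0$, then as $\hat b\to0^+$ the zero $\lambda^\ast(\hat b)$ of $1-\hat b\hat K$ in $(-b_N,-b_{N-1})$ migrates toward $-b_N$, so the infimum $-b_N$ is only reached in the limit and strictness is lost; I would therefore either assume $b_{\text{min}}>0$ or relax the endpoint to $[-b_N,0]$. The remaining delicate point is the bookkeeping behind the identity $W_{\alpha,\beta}(T)\cap\R=\overline S$, i.e.\ ruling out real accumulation points of complex solutions other than essential-spectrum points.
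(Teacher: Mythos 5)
Your proof is correct and follows essentially the same route as the paper: both reduce everything to the scalar function $g$ (equivalently to the real-coefficient polynomial $r(\lambda)\prod_{j=1}^N(\lambda+b_j)$, which gives the symmetry), and both extract the sign information $r>0$ on $[0,\infty)$ and on $(-\infty,-b_N)$ from the assumption $1>b_{\text{max}}\sum_j a_j$. You do fill in steps the paper leaves implicit --- the inclusion $\sigma_{\text{ess}}(T)\subset W_{\alpha,\beta}(T)\cap\R$ via the blow-up of $g$ to the left of each zero of $1-\hat b\hat K$, and the control of real accumulation points of non-real solutions --- and your caveat about $b_{\text{min}}=0$ is a genuine observation: in that case the left endpoint of the set defining $c_0$ migrates to $-b_N$ as $\hat b\to 0^+$, so the strict inclusion $[c_0,c_1]\subset(-b_N,0]$ requires either $b_{\text{min}}>0$ or closing the interval at $-b_N$.
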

\begin{proof}
	We have $\lambda\in W_{\alpha,\beta}(T)$ if and only if $g(\lambda)\in W(T_a)$ for some $\hat b$. Note that $g(\lambda)\rightarrow -\infty$, $\lambda\rightarrow\pm\infty$, which implies that $W_{\alpha,\beta}(T)\cap\R$ is contained in a bounded interval. Set
	\begin{equation}\label{eq:r}
	r(\lambda)=\lambda^2+\alpha-\beta\sum_{j=1}^N\frac{a_jb_j}{\lambda+b_j}.	
	\end{equation}
	From the assumption follows $r(0)>0$ and we have $r'(\lambda)>0$ for $\lambda\geq 0$. Moreover, $r(\lambda)>0$ for $\lambda<-b_N$.
	Define the $(N+2)$-degree polynomial $g_{\alpha,\beta}(\lambda)=r(\lambda)\prod_{j=1}^N (\lambda+b_j)$. The polynomial $g_{\alpha,\beta}$ has real coefficients, which implies the symmetry with respect to $\R$.
\end{proof}
In the following lemma, $\C_{-}$ denote the set of all complex numbers with non-positive real part.
\begin{lem}
	Assume that $T$ satisfies the assumptions in Theorem \ref{th:essspec} with $T_b=bT_a$ and set $\lambda:=x+iy$, $x,y\in\R$. Then $W_{\alpha,\beta}(T)\subset \C_{-}$ and $\lambda=x+iy\in 	W_{\alpha,\beta}(T)\setminus\R$ if and only if
	\begin{equation}\label{eq:reim}
	\begin{cases}
	2x+\beta\sum_{j=1}^N\frac{a_jb_j}{(x+b_j)^2+y^2} & =0,\\
	x^2-y^2+\alpha-\beta\sum_{j=1}^N\frac{a_jb_j(x+b_j)}{(x+b_j)^2+y^2} & =0,	
	\end{cases}
	\end{equation}
	for some $\beta\in[b_{\text{min}}\alpha,b_{\text{max}}\alpha]$, with $ \alpha\in W(T_a)$. 
\end{lem}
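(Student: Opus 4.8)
The plan is to work directly from the scalar relation $\lambda^2+\alpha-\beta\hat K(\lambda)=0$ that generates $W_{\alpha,\beta}(T)$, substitute $\lambda=x+iy$, and split it into real and imaginary parts. First I would rationalize each term of $\hat K$ by writing $\frac{a_jb_j}{(x+b_j)+iy}=\frac{a_jb_j[(x+b_j)-iy]}{(x+b_j)^2+y^2}$, which gives $\Re\hat K(\lambda)=\sum_{j=1}^N\frac{a_jb_j(x+b_j)}{(x+b_j)^2+y^2}$ and $\Im\hat K(\lambda)=-y\sum_{j=1}^N\frac{a_jb_j}{(x+b_j)^2+y^2}$. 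Since $\alpha$ and $\beta$ are real and $\lambda^2=x^2-y^2+2ixy$, equating the real and imaginary parts of $\lambda^2+\alpha-\beta\hat K(\lambda)$ to zero yields $x^2-y^2+\alpha-\beta\sum_j\frac{a_jb_j(x+b_j)}{(x+b_j)^2+y^2}=0$ together with $2xy+\beta y\sum_j\frac{a_jb_j}{(x+b_j)^2+y^2}=0$. For $\lambda\in\C\setminus\R$ we have $y\neq0$, so dividing the second identity by $y$ reproduces the first line of \eqref{eq:reim} while the first identity is exactly the second line; reversing the steps (multiplying the first line of \eqref{eq:reim} back by $y$) recovers the complex equation, so the equivalence holds on the generating set.

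For the inclusion $W_{\alpha,\beta}(T)\subset\C_{-}$ I would show that $\Re\lambda\le0$ already on the generating set and then use that $\C_{-}$ is closed. For real $\lambda$ this is immediate from Lemma \ref{lem:ess}, which gives $W_{\alpha,\beta}(T)\cap\R\subset(-b_N,0]$. For $\lambda\notin\R$ the first line of \eqref{eq:reim} reads $2x=-\beta\sum_j\frac{a_jb_j}{(x+b_j)^2+y^2}$; because $T_a$ is positive we have $\alpha\in W(T_a)\subset(0,\infty)$, hence $\beta\ge b_{\text{min}}\alpha\ge0$, and every summand is strictly positive, so the right-hand side is non-positive and $x\le0$. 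Thus the generating set lies in $\C_{-}$, and taking the closure preserves the inclusion.

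The step that needs care is the closure in the definition of $W_{\alpha,\beta}(T)$: I must check that a non-real limit point of the generating set still satisfies \eqref{eq:reim}. For fixed $\lambda=x+iy$ with $y\neq0$ the first line of \eqref{eq:reim} determines $\beta$ uniquely (the sum is nonzero) and the second line then determines $\alpha$ uniquely, so $\beta=\beta(\lambda)$ and $\alpha=\alpha(\lambda)$ are continuous on $\C\setminus\R$. Since $W(T_a)$ is a closed half-line and the constraint $\beta\in[b_{\text{min}}\alpha,b_{\text{max}}\alpha]$ is a closed condition, the non-real part of the generating set is relatively closed in $\C\setminus\R$; consequently any $\lambda\in W_{\alpha,\beta}(T)\setminus\R$, being the limit of nearby non-real generating points, satisfies \eqref{eq:reim} with admissible $\alpha,\beta$. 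This continuity-and-closedness argument is the main obstacle, since it is the only place where the closure could in principle add non-real points outside the characterization.
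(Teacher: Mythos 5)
Your proof is correct and follows essentially the same route as the paper: take real and imaginary parts of $\lambda^2+\alpha-\beta\hat K(\lambda)$, divide the imaginary part by $y\neq 0$, and read off $\Re\lambda\leq 0$ from the first equation of \eqref{eq:reim} since $\beta\geq 0$ and each summand is positive. Your additional continuity/closedness argument for non-real limit points is a detail the paper's one-line proof silently skips, and it is a worthwhile addition rather than a deviation.
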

\begin{proof}
	The system \eqref{eq:reim} follows directly by taking the real and imaginary parts of \eqref{eq:r} and $W_{\alpha,\beta}(T)\subset \C_{-}$ is a consequence of \eqref{eq:reim}.
\end{proof}

\begin{lem}
	Set $I_\alpha=\{iy\in i\R\,:\, y=\pm\sqrt{\alpha}\,,\ \text{for some}\,\, \alpha\in W(T_a)\}$. Then
	\[
	W_{\alpha,\beta}(T)\cap i\R=
	\begin{cases}
	\emptyset, & b_{\text{min}}>0,\\	
	I_\alpha,  & b_{\text{min}}=0,
	\end{cases}
	\]
\end{lem}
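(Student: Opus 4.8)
The plan is to test membership of a purely imaginary point $\lambda=iy$ directly in the defining relation $\lambda^2+\alpha-\beta\hat K(\lambda)=0$. Substituting $\lambda=iy$ and separating real and imaginary parts (equivalently, setting $x=0$ in \eqref{eq:reim}) gives
\[
\alpha-y^2-\beta\sum_{j=1}^N\frac{a_jb_j^2}{b_j^2+y^2}=0,\qquad \beta\,y\sum_{j=1}^N\frac{a_jb_j}{b_j^2+y^2}=0.
\]
Since $a_j,b_j>0$, the sum in the second equation is strictly positive, so the imaginary part forces $\beta y=0$. This is the structural observation that drives the whole argument, and it immediately makes $\beta$ the quantity to track.

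For the inclusion $I_\alpha\subseteq W_{\alpha,\beta}(T)$ in the case $b_{\text{min}}=0$, I would argue constructively. Given $\alpha\in W(T_a)$, choose $\beta=0$, which is admissible precisely because $b_{\text{min}}\alpha=0\le 0\le b_{\text{max}}\alpha$; the first equation then reduces to $y^2=\alpha$, so $iy$ with $y=\pm\sqrt{\alpha}$ is an \emph{exact} solution lying in $\domS$ (it is nonzero and not equal to any $-b_j$), hence belongs to $W_{\alpha,\beta}(T)$. This exhibits all of $I_\alpha$ and shows the set is at least $I_\alpha$ when $b_{\text{min}}=0$.

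The delicate direction, and the main obstacle, is the reverse inclusion, because $W_{\alpha,\beta}(T)$ is defined as a closure and I must rule out purely imaginary \emph{limit} points that are not themselves solutions. I would take $iy_0\in W_{\alpha,\beta}(T)\cap i\R$ together with a sequence $\lambda_n=x_n+iy_n\to iy_0$ of exact solutions carrying parameters $\alpha_n\in W(T_a)$ and $\beta_n\in[b_{\text{min}}\alpha_n,b_{\text{max}}\alpha_n]$. The first equation of \eqref{eq:reim} reads $2x_n=-\beta_n\sum_j a_jb_j/((x_n+b_j)^2+y_n^2)$; since $b_j>0$ the sum converges to the strictly positive limit $\sum_j a_jb_j/(b_j^2+y_0^2)$ while $x_n\to 0$, which forces $\beta_n\to 0$. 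The key uniform input is that $T_a$ positive with compact resolvent gives $W(T_a)=[W_{\text{min}}(T_a),\infty)$ with $W_{\text{min}}(T_a)>0$, so $\alpha_n\ge W_{\text{min}}(T_a)>0$ for every $n$.

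With $\beta_n\to 0$ in hand, the two cases split cleanly. If $b_{\text{min}}>0$, then $\beta_n\ge b_{\text{min}}W_{\text{min}}(T_a)>0$, contradicting $\beta_n\to 0$; hence no purely imaginary point can occur even in the closure, and $W_{\alpha,\beta}(T)\cap i\R=\emptyset$. If $b_{\text{min}}=0$, then $\beta_n\to 0$ is consistent, and I would feed this together with the boundedness of $\sum_j a_jb_j(x_n+b_j)/((x_n+b_j)^2+y_n^2)$ (the denominators stay bounded away from $0$ since $x_n\to 0$ and $b_j>0$) into the second equation to obtain $\alpha_n\to y_0^2$. Because $W(T_a)$ is closed, $y_0^2\in W(T_a)$, so $iy_0\in I_\alpha$; note also $y_0\ne 0$, since $y_0=0$ would force $\alpha_n\to 0$, again contradicting $\alpha_n\ge W_{\text{min}}(T_a)>0$. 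Combining the two inclusions gives the stated dichotomy.
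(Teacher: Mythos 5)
Your proof is correct and follows the same route as the paper, which simply asserts that the claims "follow directly from \eqref{eq:reim}"; you supply the details the paper leaves implicit, in particular the closure argument forcing $\beta_n\to 0$ along approximating sequences and the identification of the limit via closedness of $W(T_a)$. The one loose end is that \eqref{eq:reim} characterizes only the \emph{non-real} points of $W_{\alpha,\beta}(T)$, so the point $0\in i\R\cap\R$ could a priori also arise as a limit of real solutions; this is excluded because for real $\lambda$ near $0$ one has $\lambda^2+\alpha-\beta\hat K(\lambda)\ge \lambda^2+\alpha\bigl(1-b_{\text{max}}\hat K(\lambda)\bigr)>0$ uniformly over admissible $(\alpha,\beta)$.
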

\begin{proof}
	The claims follow directly from  \eqref{eq:reim}.
\end{proof}
Note that we for many configurations do not expect spectrum on $i\R$ even if $b_{\text{min}}=0$. The reason is that it has been shown that the solution of the corresponding time dependent problem decays exponentially to zero under some additional "geometric" conditions on $b$ \cite{MR1848535}.

 In the next subsection we provide more detailed spectral results for the case with one rational term.

\subsection{A one-pole case}

In this subsection, we consider the rational operator function with one pole. Assume that $T$ satisfies the assumptions in Theorem \ref{th:essspec} with $T_b=bT_a$ and define for $\lambda\in\C$ the operator polynomial
\[
P(\lambda):=(\lambda+b_1)T(\lambda)=\lambda^3+b_1\lambda^2+T_a\lambda+b_1 T_a-a_1 b_1T_b.
\]
Set 
\[
	p_{\alpha,\beta}(\lambda)=\lambda^3+b_1\lambda^2+\alpha\lambda+b_1\alpha-a_1 b_1\beta
\]
and define the enclosure $W_{\alpha,\beta}(P)$ of $\overline{W(P)}$ as
\[
W_{\alpha,\beta}(P):=\overline{\{\lambda\in\C\, :\, p_{\alpha,\beta}(\lambda)=0, \alpha\in W(T_a),\, \beta\in[b_{\text{min}}\alpha,b_{\text{max}}\alpha]\}}.
\]
Then, the numerical range $W(P)$ is composed of at most three components and the following lemma provides, for the one-pole case, more explicit constants than Lemma \ref{lem:ess}.
\begin{lem}\label{lemOne:ess} 
	The enclosure $W_{\alpha,\beta}(P)$ has the following properties
		\begin{description}
		\item[(i)] $W_{\alpha,\beta}(P)\cap\R\subset [c_0,c_1]$, where  $[c_0,c_1]\subset [-b_1,0]$ are given by
		\[
		c_0=\min_{\lambda\in\R}\left \{g(\lambda)\geq W_{\text{min}}(T_a)\, \text{for some}\, \hat b\right\},\quad c_1=-b_1+b_{\text{max}}a_1b_1.
		\]
	\item[(ii)] $[-b_1+b_{\text{min}}a_1 b_1,-b_1+b_{\text{max}}a_1 b_1]\subset W_{\alpha,\beta}(P)$ if and only if $W(T_a)$ is unbounded.
		\item[(iii)] Set
		\[
		d_0=-\frac{b_{\text{max}} a_1 b_1}{2},\quad d_1=-\frac{c_0}{2}-\frac{b_1}{2},\quad \hat d=\sqrt{W_{\text{min}}(T_a)-d_0^2-2d_0c_0}.
		\]
		Then the numerical range is contained in the union of
		\begin{align*} 
			S_0	&=  [c_0,c_1], \\ 
			S_+ &=  \{s\in\C\,:\, \text{Re}\, s\in [d_0,d_1],\, \text{Im}\, s\geq \hat d\}, \\
			S_-	&=\{s\in\C\,:\, \text{Re}\, s\in [d_0,d_1],\, \text{Im}\, s\leq -\hat d\}.
		\end{align*}
	
	\end{description}
\end{lem}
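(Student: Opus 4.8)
The plan is to push everything down to the scalar cubic $p_{\alpha,\beta}$ and to exploit two structural facts. On the one hand, $p_{\alpha,\beta}(\lambda)=(\lambda+b_1)r(\lambda)$ with $r$ as in \eqref{eq:r} for $N=1$, and whenever $r(\lambda)=0$ with $\beta=\hat b\alpha$ one has $g(\lambda)=\alpha$, so that a real point of the enclosure is characterized by $g(\lambda)\in W(T_a)$ for some $\hat b\in[b_{\text{min}},b_{\text{max}}]$, exactly as in Lemma~\ref{lem:ess}. On the other hand, for a conjugate pair $x\pm iy$ with the (necessarily real) third root $\rho$, Vieta's formulas for the monic cubic read
\[
2x+\rho=-b_1,\qquad x^2+y^2+2x\rho=\alpha .
\]
For part (i) I reuse the sign data from Lemma~\ref{lem:ess} ($r(0)>0$, $r'>0$ on $[0,\infty)$, $r>0$ on $(-\infty,-b_1)$), so that real roots lie in $(-b_1,0)$; the bound $\lambda\ge c_0$ is then immediate from the definition of $c_0$, since $g(\lambda)=\alpha\ge W_{\text{min}}(T_a)$. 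For the upper bound I note that the only pole of $g(\,\cdot\,;\hat b)$ in $(-b_1,0)$ sits at $Q(\hat b):=-b_1+a_1b_1\hat b$ and that $g>0$ holds exactly on $(-b_1,Q(\hat b))$; hence $\lambda<Q(\hat b)\le Q(b_{\text{max}})=c_1$, and $c_1<0$ because $a_1b_{\text{max}}<1$, giving $[c_0,c_1]\subset[-b_1,0]$.

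For part (ii) the relevant object is the family of poles $Q(\hat b)$, which sweeps out the interval $[-b_1+b_{\text{min}}a_1b_1,\,c_1]$ as $\hat b$ runs over $[b_{\text{min}},b_{\text{max}}]$. If $W(T_a)$ is unbounded, then $g(\lambda;\hat b)\to+\infty$ as $\lambda\to Q(\hat b)^-$, so points just to the left of $Q(\hat b)$ belong to the root set and $Q(\hat b)$ lies in its closure; varying $\hat b$ covers the whole interval. Conversely, if $W(T_a)=[W_{\text{min}},W_{\text{max}}]$ is bounded, then $g(\lambda;\hat b)\le W_{\text{max}}$ forces $\lambda\le\lambda^{\ast\ast}(\hat b)<Q(\hat b)$, so the real roots stay in $[c_0,c_1']$ with $c_1':=\max_{\hat b}\lambda^{\ast\ast}(\hat b)<c_1$, and the interval fails to be contained in $W_{\alpha,\beta}(P)$.

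For part (iii) the real-part enclosure is pure Vieta: $\rho$ is a real root of $p_{\alpha,\beta}$, hence $\rho\in[c_0,c_1]$ by part (i), and $2x=-b_1-\rho$ converts $\rho\le c_1$ into $x\ge d_0$ and $\rho\ge c_0$ into $x\le d_1$. For the imaginary part I rewrite the second Vieta relation as $y^2=\alpha-2x\rho-x^2$ and estimate termwise using $\alpha\ge W_{\text{min}}(T_a)$ together with $x\in[d_0,d_1]\subset(-\infty,0]$ and $\rho\in[c_0,c_1]\subset(-\infty,0]$: since $|x|\le|d_0|$ and $|\rho|\le|c_0|$ we get $x^2\le d_0^2$ and $0\le x\rho\le d_0c_0$, so $y^2\ge W_{\text{min}}(T_a)-d_0^2-2d_0c_0=\hat d^2$. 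Combined with $x\in[d_0,d_1]$ and the sign of $y$, this puts each conjugate pair into $S_+\cup S_-$, the real roots forming $S_0$.

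The step I expect to be most delicate is the \emph{only if} part of (ii): verifying that $c_1$ genuinely leaves the closed set $W_{\alpha,\beta}(P)$ when $W(T_a)$ is bounded. This requires knowing that $g(\,\cdot\,;\hat b)=W_{\text{max}}$ is solved strictly left of the pole uniformly in $\hat b$ (so that $c_1'<c_1$) and, crucially, that no complex root can accumulate on $\R$; the latter is exactly the estimate $|y|\ge\hat d>0$ from part (iii), so (iii) must be in place before (ii) is closed. A minor point to record is that $\hat d$ is real, i.e. $W_{\text{min}}(T_a)\ge d_0^2+2d_0c_0$, which is the regime in which $S_\pm$ are nonempty.
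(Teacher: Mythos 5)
Your proposal is correct and follows essentially the same route as the paper: reduction to the scalar cubic $p_{\alpha,\beta}=(\lambda+b_1)r(\lambda)$, the sign analysis of $g$ for the real points in (i)--(ii), and Vieta's formulas for the conjugate pair together with the real root $\rho\in[c_0,c_1]$ to obtain the bounds $x\in[d_0,d_1]$ and $y^2\ge W_{\text{min}}(T_a)-d_0^2-2d_0c_0$ in (iii). You supply considerably more detail than the paper's one-line proofs of (i) and (ii) -- in particular the uniform-in-$\hat b$ separation from the pole and the role of $|y|\ge\hat d$ in the ``only if'' part of (ii) -- but the underlying argument is the same.
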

\begin{proof}
	We prove the first two claims by studying the function $g$ defined in \eqref{eq:g}. 
	\begin{description}
		\item[(i)]  The claim follows from the assumptions $1-b_{\text{max}}a_1>0$, $W_{\text{min}}(T_a)>0$ and the definition of $g$.
		\item[(ii)] The function $b$ is continuous and $g$ is singular when $\lambda=-b_1+\hat ba_1 b_1$ for some $\hat b\in [b_{\text{min}},b_{\text{max}}]$.
		\item[(iii)] Set $p(\lambda)=p_{\alpha,\beta}(\lambda)$ and write $p$ in the form
		\[
		p(\lambda)=\lambda^3+b_1\lambda^2+\alpha \lambda+c,
		\]
		where $c=b_1(1+ \hat ba_1)\alpha$,  $\hat b\in [b_{\text{min}},b_{\text{max}}]$. This polynomial can be written in the form
		\[
		p(\lambda)=(\lambda-(u+iv))(\lambda-(u-iv))(\lambda-w)
		\]
		for some $u,v,w\in\R$. Identification of the coefficients give
		\begin{align*} 
		b_1 &=  -2u-w, \\ 
		\alpha  &=  u^2+v^2+2uw, \\
		\alpha b_1 (1+\hat ba_1)&=-w(u^2+v^2).
		\end{align*}
		Hence, $\alpha>0$ implies $w<0$ and $u+iv\neq 0$.
		Since $w$ belongs to the bounded interval $[c_0,c_1]$, it follows that $u=-\frac{1}{2}(b_1+w)$ is bounded and 
		\[
		d_0:=-\frac{1}{2}(b_1+c_1)\leq u\leq -\frac{1}{2}(b_1+c_0)=:d_1.
		\]
		Moreover
		\[
		u^2+v^2=R(\alpha)^2,\quad R(\alpha)^2=\frac{b_1(1+\hat ba_1)}{|w|}\alpha,
		\]
		where $ R(\alpha)^2\rightarrow\infty$, $\alpha\rightarrow\infty$. Hence $|v|\rightarrow\infty$ when $\alpha\rightarrow\infty$.
		We have $v^2=\alpha-u^2-2uw\geq W_{\text{min}}(T_a)-d_0^2-2d_0w_0$.
	\end{description}
\end{proof}	

\section{On equivalence and linearization of $T$}

In this section, we consider the relation between the operator function $T$ in \eqref{eq:T} and the linearization of $T$ in a general setting. Let $\CH$ be a Hilbert space and assume that $T_a$ with domain $\dom T_a$ is self-adjoint in $\CH$ and bounded by $\eta$ from below. Furthermore, $T_b$ with domain $\dom T_b$ is assumed to be non-negative and self-adjoint in $\CH$.

Set $\hat \CH=\hat \CH_1\oplus\hat\CH_1\oplus\cdots\oplus\hat\CH_N$ and let $B=[B_1\,B_2\,\dots\,B_N]: \hat \CH\rightarrow \CH$ denote a densely defined operator. Then $T:\domS\rightarrow\mathcal{L}(\CH)$ can be written in the form
\begin{equation}\label{eq:Schur}
T(\lambda)=T_a+\lambda^2-B(D+\lambda)^{-1}B^*,
\end{equation}
where $B_j=\sqrt{a_jb_j}T_b^{1/2}$ and $D=\text{diag}\,(b_1,b_2,\dots,b_N)$. Let $\tilde\CH=\CH\oplus \hat \CH$ and define the operator function $P:\C\rightarrow\mathcal{L}(\tilde\CH)$ as
\begin{equation}\label{eq:P}
P(\lambda)=
\begin{bmatrix}
T_a+\lambda^2 & B\\
B^* & D+\lambda
\end{bmatrix}
=\begin{bmatrix}
T_a+\lambda^2 		& B_1			& B_2			& \cdots		& B_N\\
B_1^* 				& b_1+\lambda	& 0  			& \cdots		& 0  \\
B_2^*				& 0				& b_2+\lambda	& \cdots		& 0   \\
\vdots				& \vdots		& \vdots		& \ddots		& \vdots\\
B_N^*				& 0				& 0				& \cdots		& b_N+\lambda
\end{bmatrix}
\end{equation}
with the natural domain. This operator function is formally related to $T$ by the relation
\[
\begin{bmatrix}
T(\lambda) & \\
0 & D+\lambda
\end{bmatrix}
=
\begin{bmatrix}
I	& 	-B(D+\lambda)^{-1}\\
0 & I
\end{bmatrix}
\begin{bmatrix}
T_a+\lambda^2 & B\\
B^* & D+\lambda
\end{bmatrix}
\begin{bmatrix}
I	& 	0\\
-(D+\lambda)^{-1}B^* &	I 
\end{bmatrix}
\]
However, the relation is not an equivalence when $B$ is unbounded since the outer operators in the product are not bounded.

Define the operator functions
\[
P_W(\lambda)=
\begin{bmatrix}
T_a+\lambda^2 	& B			& 0\\
B^*		 		& D+\lambda	&  0\\
0		 		& 0			& -\lambda
\end{bmatrix},\quad
\tilde P_W(\lambda)=
\begin{bmatrix}
T_a+\lambda^2 	& 0			& B\\
0		 		& -\lambda	&  0\\
B^*		 		& 0			& D+\lambda
\end{bmatrix}
\]
Those functions are equivalent on $\C$ since
\[
\tilde P_W(\lambda)=E P_W(\lambda) E,\quad E=\begin{bmatrix}
I 	& 0	& 0\\
0	& 0	& I\\
0	& I	& 0
\end{bmatrix}.
\]
Moreover, by applying the theory developed in \cite{ET2}, we obtain the equivalence  $	\tilde P_W(\lambda)=\tilde E(\lambda) \tilde T(\lambda) \tilde F(\lambda)$, where
\begin{equation}\label{eq:tildeT}
\tilde T(\lambda)=
\begin{bmatrix}
-\lambda & -T_a		& -B\\
I		 & -\lambda	&  0\\
0		 & B^*		& D+\lambda
\end{bmatrix},
\end{equation}
and
\[
\tilde E(\lambda)=\begin{bmatrix}
-I 	& -\lambda	& 0\\
0	& -\lambda	& 0\\
0	& 0	& I
\end{bmatrix},\quad \tilde F(\lambda)=\begin{bmatrix}
\lambda 	& I	& 0\\
I	& 0	& 0\\
0	& 0	& I
\end{bmatrix}.
\]
The function $P_W$ can be written in the form
\[
P_W(\lambda)=
\begin{bmatrix}
P(\lambda) & 0\\
0	& W(\lambda)
\end{bmatrix},\quad 	W(\lambda)=-\lambda
\]
and we have therefore shown that $P$ after operator function extension with $W$ is equivalent to $\tilde T(\lambda)$. Hence, under the assumption that the operator functions are closable we have the following properties
\[
\sigma_p(\overline{\tilde{T}})\setminus\{0\}=\sigma_p(\overline P)\setminus\{0\},\quad  \sigma_c(\overline{\tilde{T}})\setminus\{0\}=\sigma_c(\overline P)\setminus\{0\},\quad \sigma_r(\overline{\tilde{T}})\setminus\{0\}=\sigma_r(\overline P)\setminus\{0\}.
\]
\begin{thm}
	Assume that $T_a$ with domain $\dom T_a$ is self-adjoint in $\CH$ and bounded by $\eta$ from below. Let $D$ with domain $\dom D$ denote a self-adjoint operator in $\hat\CH$ and assume that $B:\hat\CH\rightarrow\CH$ is a densely defined closed operator with $\dom B\subset \dom D$, where $\dom B$ is a core of $D$. Moreover, we assume that $\dom |T_a|^{1/2}\subset\dom B^*$.
	
	Then, the operator functions $P:\C\rightarrow\mathcal{L}(\tilde\CH)$ and $\tilde T:\C\rightarrow\mathcal{L}(\CH\oplus\tilde\CH)$ are closable.
\end{thm}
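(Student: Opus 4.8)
The plan is to reduce both assertions to the closability of $P$ alone, and then to realize $P(\lambda)$ as a densely defined restriction of a closed operator. By the construction preceding the theorem, $P$ extended by the operator function $W(\lambda)=-\lambda$ is equivalent, through the constant boundedly invertible permutation $E$ and the boundedly invertible operator functions $\tilde E(\lambda),\tilde F(\lambda)$, to $\tilde T(\lambda)$. Equivalence through boundedly invertible operator functions preserves closability, and for fixed $\lambda$ the block-diagonal operator with entries $P(\lambda)$ and $-\lambda$ is closable exactly when $P(\lambda)$ is, since $-\lambda$ is bounded. Hence it suffices to show that $P(\lambda)$ is closable for each $\lambda\in\C$, and then the closability of $\tilde T$ follows at once.

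First I would identify the natural domain of $P(\lambda)$. Its block entries are $T_a+\lambda^2$, the off-diagonal operators $B$ and $B^*$, and $D+\lambda$, so the natural domain equals $(\dom T_a\cap\dom B^*)\oplus(\dom B\cap\dom D)$. The hypothesis $\dom B\subset\dom D$ reduces the second summand to $\dom B$, while $\dom T_a\subset\dom|T_a|^{1/2}\subset\dom B^*$ reduces the first to $\dom T_a$. Thus $\dom P(\lambda)=\dom T_a\oplus\dom B$, which is dense in $\tilde\CH$; the identical argument shows that $P(\bar\lambda)$ is densely defined on the same domain, so its adjoint $P(\bar\lambda)^*$ exists and is automatically closed.

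The core of the proof is the inclusion $P(\lambda)\subset P(\bar\lambda)^*$. I would establish it by checking the identity $(P(\lambda)u,v)=(u,P(\bar\lambda)v)$ for all $u=(u_1,u_2)$ and $v=(v_1,v_2)$ in $\dom T_a\oplus\dom B$, expanding the four block contributions. The two diagonal terms are handled by the self-adjointness of $T_a$ and $D$, giving $((T_a+\lambda^2)u_1,v_1)=(u_1,(T_a+\bar\lambda^2)v_1)$ and $((D+\lambda)u_2,v_2)=(u_2,(D+\bar\lambda)v_2)$. The two cross terms use only the defining property of the adjoint: $(Bu_2,v_1)=(u_2,B^*v_1)$ is valid because $v_1\in\dom T_a\subset\dom B^*$, and $(B^*u_1,v_2)=(u_1,Bv_2)$ is valid because $u_1\in\dom T_a\subset\dom B^*$ and $v_2\in\dom B$. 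Taking complex conjugates in the resulting identity shows that every $u\in\dom P(\lambda)$ belongs to $\dom P(\bar\lambda)^*$ with $P(\bar\lambda)^*u=P(\lambda)u$, that is $P(\lambda)\subset P(\bar\lambda)^*$. Since $P(\lambda)$ is a restriction of the closed operator $P(\bar\lambda)^*$, it is closable for every $\lambda$, and the theorem follows.

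I expect the only delicate point to be the domain bookkeeping in the cross terms: one must verify that $\dom|T_a|^{1/2}\subset\dom B^*$ genuinely licenses moving $B$ and $B^*$ across the inner product for vectors drawn from $\dom T_a$ and $\dom B$, with no vector leaving the domain on which $B^*$ is defined. It is worth noting that the closability argument uses only the self-adjointness of $T_a$ and $D$, the density of $\dom T_a$ and $\dom B$, and the two domain inclusions; the closedness of $B$ and the assumption that $\dom B$ be a core of $D$ are not required here, and instead support the finer structural results where the closures must be identified explicitly.
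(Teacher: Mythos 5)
Your proof is correct and follows essentially the same route as the paper: both reduce the claim to the closability of $P(\lambda)$ via the equivalence after operator extension, and both obtain that closability by splitting off the bounded $\lambda$-dependent diagonal part and observing that the remaining block operator matrix $\left[\begin{smallmatrix} T_a & B\\ B^* & D \end{smallmatrix}\right]$ is symmetric on $\dom T_a\oplus\dom B$ thanks to the domain inclusions $\dom T_a\subset\dom|T_a|^{1/2}\subset\dom B^*$ and $\dom B\subset\dom D$. The only difference is cosmetic: the paper cites the theory of upper-dominant block operator matrices, which also yields an explicit formula for the closure $\overline{\mathcal{A}}$, whereas your self-contained verification of $P(\lambda)\subset P(\bar\lambda)^*$ establishes closability without producing that formula.
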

\begin{proof}
	Write $P$ in the form
	\begin{equation}
	P(\lambda)=\mathcal{A}+\mathcal{B}(\lambda)
	\end{equation}
	where
	\begin{equation}
	\mathcal{A}=
	\begin{bmatrix}
	T_a & B\\
	B^* & D
	\end{bmatrix},\quad
	\mathcal{B}(\lambda)=
	\begin{bmatrix}
	\lambda^2 & 0 \\
	0 & \lambda
	\end{bmatrix}.
	\end{equation}
	Hence, $P(\lambda)$ is a bounded perturbation of a symmetric and upper-dominant block operator matrix, which implies that $P$ is closable \cite{MR1380703}. Let $\overline{\mathcal{A}}$ denote the closure of $\mathcal{A}$. Then, the self-adjoint operator $\overline{\mathcal{A}}$ is given by
	\[
	\overline{\mathcal{A}}
	\begin{bmatrix}
	u \\
	v
	\end{bmatrix}=
	\begin{bmatrix}
	T_a(u+\overline{(T_a-\eta)^{-1}B}v)-\eta\overline{(T_a-\eta)^{-1}B}v \\
	B^*u+Dv
	\end{bmatrix}
	\]
	with
	\[
	\dom \overline{\mathcal{A}}= \{
	\begin{bmatrix}
	u \\
	v
	\end{bmatrix}\,:\, v\in\dom D,\, u+\overline{(T_a-\eta)^{-1}B}v\in \dom T_a\}.
	\]
	Hence, $\overline{P}(\lambda)=\overline{\mathcal{A}}+\mathcal{B}(\lambda)$ with domain $\dom \overline{\mathcal{A}}$ is closed. Since $P$ is closable it follows from the equivalence discussed above that $\tilde T$ is closable.
\end{proof}
Assume that $T_b$ is bounded by zero from below, $T_a>\mu>0$, and $T_b$ is $T_a$-bounded. Then, we can as an alternative to \eqref{eq:Schur} consider the bounded operator function
\begin{equation}\label{eq:Schur2}
\hat T(\lambda)=I+\lambda^2T_a^{-1}-\hat B(D+\lambda)^{-1}\hat B^*,
\end{equation}
where $\hat B=[\hat B_1\,\hat B_2\,\dots\,\hat B_N]$, $\hat B_j=\sqrt{a_jb_j}(T_a^{-1}T_b)^{1/2}$ and $D=\text{diag}\,(b_1,b_2,\dots,b_N)$. The rational operator function $\hat T$ is then, after extension, equivalent to a block operator matrix on the form \eqref{eq:P} with bounded entries.

\subsection{One pole}

Let $\tilde\CH=\CH\oplus \hat\CH_1$ and define the closed operator $B: \hat\CH_1\rightarrow \CH$ such that $BB^*v_1=T_b v_1$. Assume that the block operator function $P:\C\rightarrow \mathcal{L}(\tilde\CH)$,
\begin{equation}\label{eq:POne}
P(\lambda)=
\begin{bmatrix}
T_a+\lambda^2 & B\\
B^* & b_1+\lambda
\end{bmatrix}
\end{equation}
with $\dom P(\lambda) =\dom T_a \oplus \dom B$ for $\lambda\in\C$ is closed.
\begin{prop} 
	Assume $B$ is injective. Then $-b_1\not\in\sigma_p(P)$.
\end{prop}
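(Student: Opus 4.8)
The goal is to show that $-b_1$ is not an eigenvalue of $P$, given that $B$ is injective.

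\medskip

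The plan is to argue directly by contradiction. Suppose $-b_1 \in \sigma_p(P)$, so there exists a nonzero vector $\begin{bmatrix} u \\ v_1 \end{bmatrix} \in \dom P(-b_1) = \dom T_a \oplus \dom B$ with $P(-b_1)\begin{bmatrix} u \\ v_1 \end{bmatrix} = 0$. Substituting $\lambda = -b_1$ into \eqref{eq:POne} gives the block system
\[
\begin{cases}
(T_a + b_1^2)\,u + B v_1 = 0, \\
B^* u + 0\cdot v_1 = 0.
\end{cases}
\]
The second equation immediately yields $B^* u = 0$, which is the crux: the diagonal entry $b_1 + \lambda$ vanishes exactly at $\lambda = -b_1$, decoupling $v_1$ from the lower block and forcing $u \in \Ker B^*$.

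\medskip

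From here I would exploit the relation $BB^* v_1 = T_b v_1$ together with injectivity of $B$. First, from $B^* u = 0$ and the first equation we obtain $(T_a + b_1^2) u = -B v_1$. The natural move is to test against $u$: taking the inner product of the first equation with $u$ and using $(Bv_1, u) = (v_1, B^* u) = 0$ gives $((T_a + b_1^2) u, u) = 0$. Since $T_a$ is positive (indeed $T_a > \mu > 0$ in the relevant setting, or at least bounded below so that $T_a + b_1^2 > 0$), this forces $u = 0$. Then the first equation reduces to $B v_1 = 0$, and injectivity of $B$ gives $v_1 = 0$, contradicting that the eigenvector is nonzero.

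\medskip

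The main obstacle I anticipate is making the test-against-$u$ step rigorous when the operators are unbounded: one must ensure $u \in \dom T_a$ (guaranteed by the natural domain) and that the pairing $(Bv_1, u) = (v_1, B^* u)$ is legitimate, i.e.\ that $u \in \dom B^*$ with $B^* u = 0$, which is exactly what the second block equation supplies. I would also want to confirm that positivity of $T_b = BB^*$ is consistent with the sign of $T_a + b_1^2$; since $T_a$ is bounded below and $b_1^2 > 0$, one should verify $T_a + b_1^2$ is strictly positive on the relevant domain (if $T_a$ is merely bounded below by $\eta$ possibly negative, one checks $\eta + b_1^2 > 0$, or argues via the positivity needed for the definiteness conclusion). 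Once the definiteness of $T_a + b_1^2$ is secured, the argument closes cleanly, with injectivity of $B$ doing the final work to eliminate $v_1$.
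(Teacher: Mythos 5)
Your argument is correct and is essentially the paper's own proof: both reduce to the block system $(T_a+b_1^2)u+Bv_1=0$, $B^*u=0$, kill the cross term via $(Bv_1,u)=(v_1,B^*u)=0$, use positivity of $T_a+b_1^2$ to force $u=0$, and then invoke injectivity of $B$ to get $v_1=0$. The paper phrases this as a case split on whether the first component vanishes, but the content and the key steps are identical to yours.
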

\begin{proof}
	Assume  $-b_1\in\sigma_p(P)$. Then it exists a non-zero $v=[v_1,v_2]^T\in \tilde\CH$ such that
	\begin{equation}\label{eq:b1eq1}
	(T_a+b_1^2)v_1+Bv_2=0,
	\end{equation}
	\begin{equation}\label{eq:b1eq2}
	B^*v_1=0.
	\end{equation}
	Take $v_1=0$. Then $B v_2=0$, which since $B$ is injective implies $v_2=0$. Assume $v_1\neq 0$. Taking the inner product with $v_1$ in \eqref{eq:b1eq1} we find that
	\[
	0=( (T_a+b_1^2)v_1,v_1)+(v_1,B^* v_1)
	\]
	but \eqref{eq:b1eq2} implies that $B^* v_1=0$ and $T_a+b_1^2\geq b_1^2$.
\end{proof}
\begin{prop}\label{prop:P} 
	Let $P$ denote the operator function \eqref{eq:POne}. Then
	\begin{description}
		\item[(i)] If $\lambda\in\sigma_p(P)\setminus\{-b_1\}$ with eigenvector $v=[v_1\, v_2]^T$. Then $v_1$ is an eigenvector of $T$ at $\lambda$.
		\item[(ii)] If $\lambda\in\sigma_p(T)$ with eigenvector $v_1$. Then
		\[
			v=
			\begin{bmatrix}
				v_1\\
				-(b_1+\lambda)^{-1}B^*v_1
			\end{bmatrix}
		\]
		is an eigenvector of $P$ at $\lambda$.
	\end{description}
\end{prop}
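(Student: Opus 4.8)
The plan is to exploit the Schur-complement structure that already links $P$ and $T$ in Section 3: when $\lambda\neq-b_1$ the bottom-right entry $b_1+\lambda$ of $P(\lambda)$ is invertible, and eliminating the second component identifies $T(\lambda)=T_a+\lambda^2-(b_1+\lambda)^{-1}BB^*$ with the Schur complement of $P(\lambda)$. Rather than invoke the global factorization (whose outer factors are unbounded), I would argue directly on the eigenvalue equations, which is cleaner at the level of individual vectors.

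First I would write $P(\lambda)v=0$ for $v=[v_1\,v_2]^T\in\dom T_a\oplus\dom B$ componentwise as
\[
(T_a+\lambda^2)v_1+Bv_2=0,\qquad B^*v_1+(b_1+\lambda)v_2=0.
\]
For (i), since $\lambda\neq-b_1$ the second equation gives $v_2=-(b_1+\lambda)^{-1}B^*v_1$; substituting into the first yields $(T_a+\lambda^2)v_1-(b_1+\lambda)^{-1}BB^*v_1=T(\lambda)v_1=0$. To see that $v_1$ is a genuine eigenvector I rule out $v_1=0$: if $v_1=0$ then $Bv_2=0$ and $(b_1+\lambda)v_2=0$, and the latter forces $v_2=0$ because $\lambda\neq-b_1$, contradicting $v\neq0$. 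Hence $v_1\neq0$, $v_1\in\dom T(\lambda)$, and $T(\lambda)v_1=0$.

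For (ii), given $T(\lambda)v_1=0$ with $v_1\neq0$ — note $\lambda\in\sigma_p(T)\subset\domS$, so automatically $\lambda\neq-b_1$ — I would set $v_2:=-(b_1+\lambda)^{-1}B^*v_1$ and verify the two displayed equations in reverse: the second holds by construction, and the first collapses to $T(\lambda)v_1=0$. Since $v_1\neq0$ the vector $v=[v_1\,v_2]^T$ is nonzero, so it is an eigenvector of $P$ at $\lambda$.

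The main obstacle is not the algebra, which is routine Schur elimination, but the domain bookkeeping that makes each manipulation legitimate. I must check that $v_1\in\dom B^*$ so that $B^*v_1$ (hence $v_2$) is defined, that $v_2=-(b_1+\lambda)^{-1}B^*v_1\in\dom B$ so that $Bv_2$ appears in the first row, and that $v_1\in\dom T(\lambda)$, i.e. $v_1\in\dom T_a$ with $BB^*v_1$ defined. These are exactly the compatibility conditions encoded in $\dom P(\lambda)=\dom T_a\oplus\dom B$ and in the natural domain of the Schur complement $T$, and I would track them explicitly in both directions, using the standing hypothesis relating $\dom|T_a|^{1/2}$ to $\dom B^*$ to guarantee that $B^*v_1$ is well defined whenever $v_1\in\dom T_a$.
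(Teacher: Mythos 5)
Your proposal is correct and follows essentially the same route as the paper: writing $P(\lambda)v=0$ componentwise, solving the second row for $v_2=-(b_1+\lambda)^{-1}B^*v_1$, and substituting into the first row to recover $T(\lambda)v_1=0$, with the converse read off from the definition of $P(\lambda)$. Your additional checks that $v_1\neq 0$ and the domain bookkeeping are sensible refinements of the paper's terser argument, not a different method.
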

\begin{proof}
	(i)  Assume $\lambda\in\sigma_p(P)\setminus\{-b_1\}$ with eigenvector $v=[v_1\, v_2]^T$. Then
	\begin{equation}\label{eq:nb1eq1}
	(T_a+\lambda^2)v_1+Bv_2=0,
	\end{equation}
	\begin{equation}\label{eq:nb1eq2}
	B^*v_1+(b_1+\lambda)v_2=0.
	\end{equation}
	Hence \eqref{eq:nb1eq1} with $v_2=-(b_1+\lambda)^{-1}B^* v_1$ show that $T(\lambda)v_1=0$.\\\hspace{5mm}
	(ii) The claim follows immediately from the definition of $P(\lambda)$. 
\end{proof}
Let $\tilde T:\C\rightarrow\mathcal{L}(\CH\oplus\tilde\CH)$ denote the operator function \eqref{eq:tildeT} in the one-pole case. Hence, $\tilde\CH=\CH\oplus \hat\CH_1$ and
\begin{equation}\label{eq:OnetildeT}
\tilde T(\lambda)=
\begin{bmatrix}
-\lambda & -T_a		& -B\\
I		 & -\lambda	&  0\\
0		 & B^*		& b_1+\lambda
\end{bmatrix},
\end{equation}
with $\dom \tilde T(\lambda) =\CH\oplus\dom T_a \oplus \dom B$ for $\lambda\in\C$.
\begin{prop}\label{prop:tildeT}
	Let $P$ and $\tilde T$ denote the operator functions in \eqref{eq:POne} and \eqref{eq:OnetildeT}. Then
	\begin{description}
		\item[(i)] If $\lambda\in\sigma_p(\tilde T)$ with eigenvector 
		$v=[v_1\, v_2\, v_3]^T$. Then $[v_2\,v_3]^T$ is an eigenvector of $P$ at $\lambda$.
		\item[(ii)] If $\lambda\in\sigma_p(P)$ with eigenvector $[v_2\,v_3]^T$. Then $v=[\lambda v_2\, v_2\, v_3]^T$ is an eigenvector of $\tilde T$ at $\lambda$.
	\end{description}
\end{prop}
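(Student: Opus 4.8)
The plan is to prove both inclusions by a direct componentwise comparison of the eigenvalue equations, exactly in the spirit of the linearization behind \eqref{eq:tildeT} and of the proof of Proposition \ref{prop:P}. Writing $v=[v_1\,v_2\,v_3]^T$, the equation $\tilde T(\lambda)v=0$ reads, row by row, $-\lambda v_1-T_a v_2-Bv_3=0$, $v_1-\lambda v_2=0$, and $B^*v_2+(b_1+\lambda)v_3=0$; while $P(\lambda)[w_1\,w_2]^T=0$ reads $(T_a+\lambda^2)w_1+Bw_2=0$ and $B^*w_1+(b_1+\lambda)w_2=0$. The whole proposition rests on the observation that the second row of $\tilde T$ forces $v_1=\lambda v_2$, and that eliminating $v_1$ from the first row then produces precisely the first row of $P$.

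For part (i), I would start from $\tilde T(\lambda)v=0$ and use the middle equation $v_1=\lambda v_2$ to substitute into the top equation, obtaining $(T_a+\lambda^2)v_2+Bv_3=0$; together with the already-given bottom equation $B^*v_2+(b_1+\lambda)v_3=0$ this is exactly $P(\lambda)[v_2\,v_3]^T=0$. The only things left to check are that $[v_2\,v_3]^T$ lies in $\dom P(\lambda)=\dom T_a\oplus\dom B$, which is immediate since $v\in\dom\tilde T(\lambda)=\CH\oplus\dom T_a\oplus\dom B$, and that $[v_2\,v_3]^T\neq 0$: if both $v_2$ and $v_3$ vanished then $v_1=\lambda v_2=0$ would force $v=0$, contradicting that $v$ is an eigenvector.

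For part (ii) the argument runs in reverse: given $P(\lambda)[v_2\,v_3]^T=0$, I would set $v_1:=\lambda v_2$ and verify the three rows of $\tilde T(\lambda)v=0$ for $v=[\lambda v_2\,v_2\,v_3]^T$. The middle row holds by the definition of $v_1$, the bottom row is the second row of $P$ verbatim, and the top row becomes $-\lambda^2 v_2-T_a v_2-Bv_3=-[(T_a+\lambda^2)v_2+Bv_3]=0$ by the first row of $P$. Domain membership $v\in\CH\oplus\dom T_a\oplus\dom B$ is again inherited from $[v_2\,v_3]^T\in\dom T_a\oplus\dom B$, and $v\neq 0$ because its second and third components already reproduce the nonzero vector $[v_2\,v_3]^T$.

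No genuine obstacle arises here: the content is purely algebraic bookkeeping of a $3\times 3$ versus $2\times 2$ block system, and the correspondence is dictated by the Schur/linearization structure already exploited in the equivalence $\tilde P_W=\tilde E\,\tilde T\,\tilde F$. The only points requiring a word of care are the nontriviality of the produced eigenvectors and their membership in the natural domains, both of which follow directly from the respective domain definitions, so I would keep those checks explicit but brief.
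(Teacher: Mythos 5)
Your proof is correct and is exactly the verification the paper intends: its own proof simply states that the claims ``follow from the definitions of $P$ and $\tilde T$,'' and your row-by-row elimination of $v_1$ via the middle equation $v_1=\lambda v_2$, together with the explicit nontriviality and domain checks, is the standard way to fill in that one-line argument. No issues.
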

\begin{proof}
	The claims follow from the definitions of $P$ and $\tilde T$. 
\end{proof}
We can write the rational operator function \eqref{eq:Schur} in the alternative form
\[
T(\lambda)=T_a+\lambda^2-\hat B(b_1+\lambda)^{-1}\hat C,\quad T_a=\hat B\hat C
\]
and use the approach in \cite{ET2} to associate an operator pencil in $\CH\oplus\tilde\CH$. By switching order of the first two spaces in $\CH\oplus\tilde\CH=\CH\oplus\CH\oplus\hat\CH_1$, we obtain
\[
\tilde T(\lambda)=
\begin{bmatrix}
-\lambda 			& 	I		& 0        \\
-T_a	 	& -\lambda	&  -\hat B \\
\hat C		& 	0		& b_1+\lambda
\end{bmatrix},
\]
with $\dom \tilde T(\lambda) =\dom T_a \oplus \CH\oplus \dom \hat B$ for $\lambda\in\C$. The computation
\begin{equation}\label{system}
\begin{bmatrix}
I 	& 	0	& 0\\
0   &   I	& 0\\
0	& 	0	& -I
\end{bmatrix}
\begin{bmatrix}
-\lambda 			& 	I		& 0        \\
-T_a	 	& -\lambda	&  -\hat B \\
\hat C		& 	0		& b_1+\lambda
\end{bmatrix}
=
\begin{bmatrix}
0 					& 	I	& 0\\
-T_a		 		&   0	&  -\hat B\\
-\hat C	& 	0	& -b_1
\end{bmatrix}-\lambda I_H=:\tilde S-\lambda
\end{equation}
shows that $\tilde T$ is equivalent to $\tilde S-\lambda$, where $\tilde S$ is an abstract version of the system operator considered in \cite{MR3041669}. The generalization to arbitrary rational terms is straightforward. The system operator in \cite{MR3041669} was derived by introducing new variables and we showed 
above how the system operator is related to the rational operator function $T$.


\section{The elastic wave equation with Boltzmann damping}

Viscoelastic materials such as synthetic polymers, wood, tissue, and metals at high temperature are common in nature and engineering. They are used to absorb chock, to isolate from vibrations, and to dampen noise. In this section we study the wave equation in $\Omega\subset\R^n$ with Boltzmann damping \cite{MR1036731}. 
The wave equation with Boltzmann damping can for $u=u(t): [0,\infty )\rightarrow L^2(\Omega)$ be written in the form
\begin{equation}\label{wave Bolzmann}
u_{tt}(t)-a\Delta u(t)+a\int_{0}^{t}K_t(t-s)\Div (b\nabla u)(s)  ds=0,
\end{equation}
with $u(x,t)=0$, $x\in\partial\Omega$ and $u(x,0)=u_0(x)$, $u_t(x,0)=u_1(x)$ \cite{MR1036731,MR1886854}.

Let $b$ be a $L^{\infty}$-function on a bounded $\Omega\subset \R^n$ with a Lipschitz continuous boundary $\partial\Omega$ and let $a$ denote a positive constant. The operator $T_b: L^{2}(\Omega)\rightarrow L^{2}(\Omega)$ is taken as the self-adjoint operator whose quadratic form is the closure of the form 
\[
\sqt_b[u,v]= a\int_{\Omega}b \nabla u \cdot \nabla vdx
\] 
with domain $C_0^{\infty}(\Omega)$. Similarly, $T_a$ denotes the self-adjoint realization of $-a\Delta$; see \cite[p. 331]{Kato1980}.

\subsection*{Space constant Boltzmann damping}

Spectral analysis of \eqref{wave Bolzmann} was in \cite{MR3041669} presented for the $\Omega\subset\R$ and $b$ constant. In the following, we consider the generalization to $\Omega\subset\R^n$, $n=1,2,3$ with one rational term and derive enclosures of the numerical range. Set
\[
\mathcal{H}:=L^2(\Omega)\oplus L^2(\Omega)\oplus L^2(\Omega)^n.
\]
The analog system operator \eqref{system} in $\mathcal{H}$ is then
\[
\tilde S=
\begin{bmatrix}
0 					& 	I	& 0\\
-T_a		 		&   0	&  -\hat B\\
-\hat C	& 	0	& -b_1
\end{bmatrix},\quad T_a=-a\Delta,\quad	 \hat B=a\text{div}b, \quad \hat C=-a_1 b_1\nabla,
\]
with $\dom\tilde S =\dom T_a \oplus L^2(\Omega) \oplus \dom \hat B$.
Moreover, in cases where $\sigma(T_a)$ is known explicitly and $b$ is constant, it is possible to numerically compute eigenvalues of $T$ to very high precision. 
Let $\Omega=(0,l_1)\times (0,l_2)\times\dots\times (0,l_n)$. Then 
\begin{equation}\label{eq:specTa}
\sigma(T_a)=\{\lambda\in\R\,:\: \lambda=\pi^2\sum_{j=1}^n\frac{m_j^2}{l_j^2} \text{for some } m_j\in \mathcal{Z}^{+}\},
\end{equation}
and $\lambda\in\sigma(T)$ if and only if $G(\lambda)\in\sigma (T_a)$, where $G$ is defined in \eqref{eq:G}. Hence, we can numerically determine eigenvalues of $T$ by determining the roots of the polynomial obtained after multiplication by $\lambda+b_1$. For the computations we use the Matlab function \emph{roots}, which for small $N$ fast and robust computes the eigenvalues of the companion matrix.

Let $N=1$, $\Omega=(0,1)\times (0,4)$, $a=2$, $b=1$, $a_1=0.9$, $b_1=0.5$. From Theorem \ref{th:essspec} with
\[
f(\lambda)=1-b\frac{a_1b_1}{\lambda+b_1}
\]
follows that $\sigma_{\text{ess}}(T)={c_1}$ and Lemma \ref{lemOne:ess} gives $c_1=-b_1+ba_1b_1$. Note that $T(c_1)=c_1^2$, $\dom T(c_1)=L^2(\Omega)$. Moreover, we obtain
\[
c_0=\min_{\lambda\in\R}\left\{g(\lambda)\geq  \pi^2+\pi^2/16\right\}\approx -0.2759.
\]
The numerically computed spectrum and enclosure of the spectrum are given in Figure \ref{fig:Case1A}, where $W_{\alpha,\beta}(T)=S_0\cup S_+\cup S_{-}$ for 
	\begin{align}\label{eq:result}
S_0	&= [-0.2759,-0.2750],\nonumber \\ 
S_+ &=  \{s\in\C\,:\, \text{Re}\, s\in [-0.11250,-0.11205],\, \text{Im}\, s\geq 4.5715\}, \\
S_-	&=\{s\in\C\,:\, \text{Re}\, s\in [-0.11250,-0.11205],\, \text{Im}\, s\leq -4.5715\}.\nonumber
\end{align}
\begin{figure}
	\includegraphics[width=4cm]{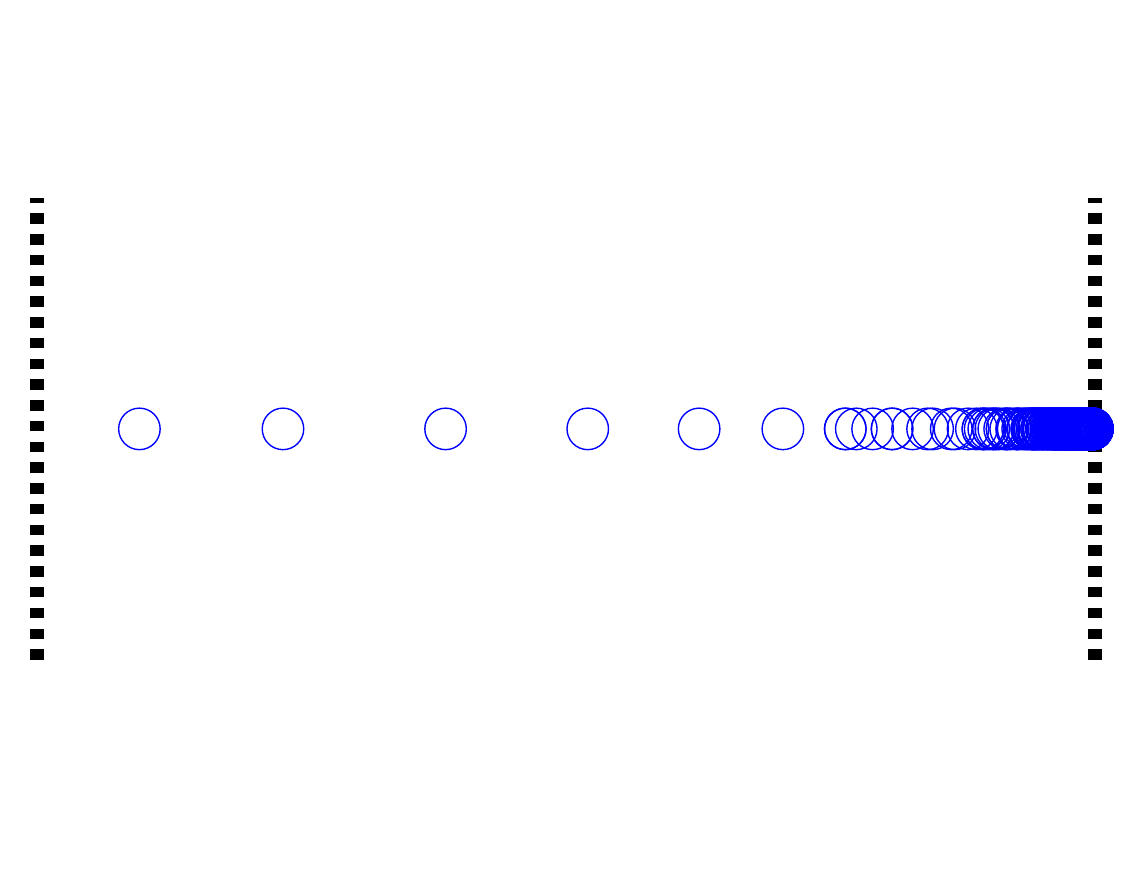}\hspace{2cm}\includegraphics[width=5cm]{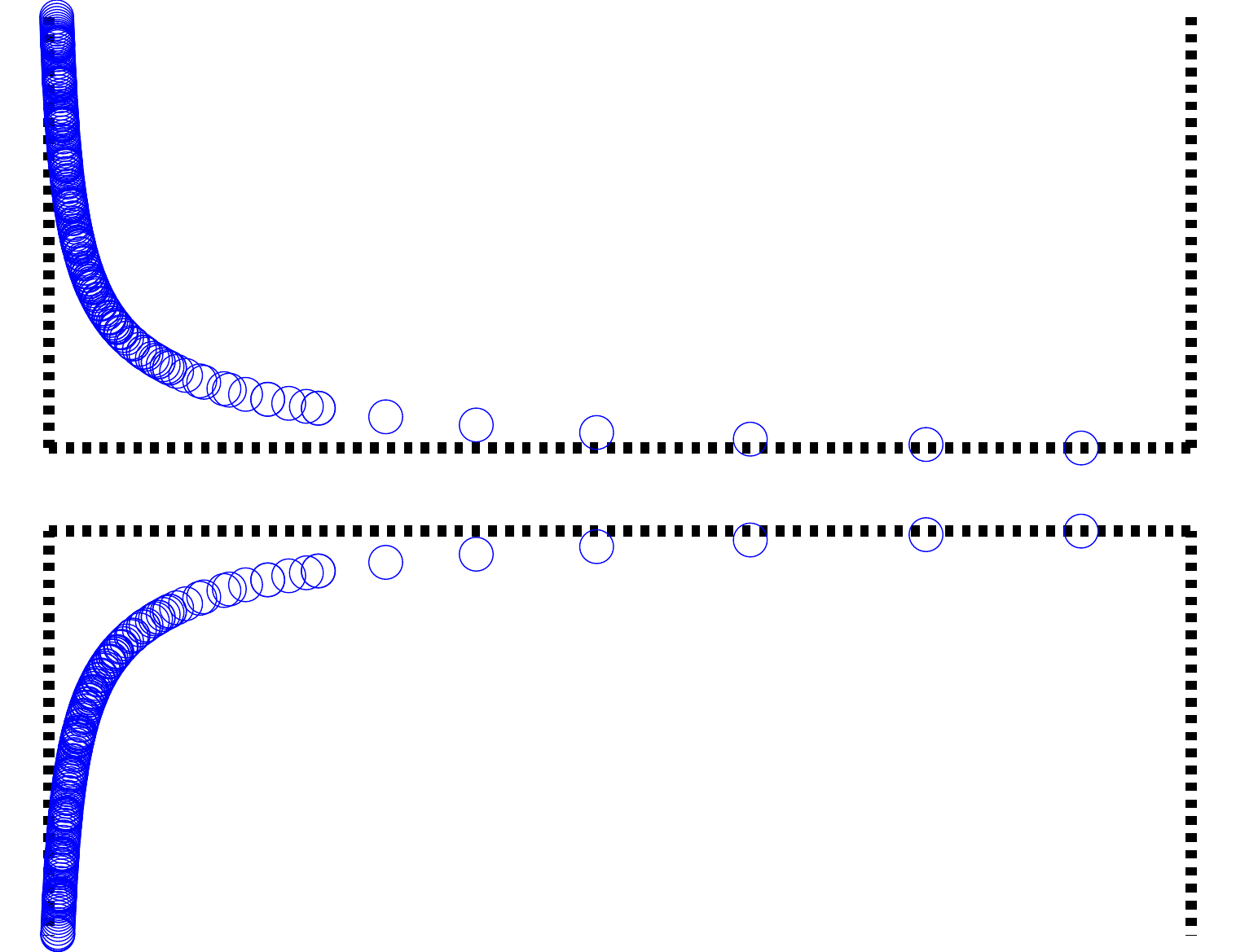}
	\caption{The dotted black lines depicts the boundaries of $S_{0}$, $S_{-}$, and $S_{+}$ in \eqref{eq:result} and the blue circles are eigenvalues with  $|\text{Im}\,\lambda|\leq 50$. Left: Real eigenvalues accumulate at $c_1$. Right: The eigenvalues accumulate at $d_0\pm i\infty$.}
	\label{fig:Case1A}
\end{figure}
Let $d_0$ denote the constant in Lemma \ref{lemOne:ess}. By taking the imaginary part of \eqref{eq:G} with $N=1$, we conclude that a sequence of eigenvalues $\{\lambda_n\}_{n=1}^{\infty}$ with Im $\lambda_n\rightarrow\pm\infty$ only is possible if Re $\lambda_n\rightarrow d_0$ when $n\rightarrow\infty$.
In the case $\Omega=(0,1)$,  it was proved in \cite{MR3041669} that it exists a branch of eigenvalues $\lambda_n$ with $\lambda_n=d_0\pm i\sqrt{a}n\pi+O(n^{-1})$, $n\rightarrow \infty$. Hence, the constant $d_0$ in Lemma \ref{lemOne:ess} is optimal. 

The condition \eqref{eq:Jordan} is satisfied, which implies that eigenvectors corresponding to real eigenvalues have no associated vectors. The eigenvectors of $T$ are related to $P$ (Proposition \ref{prop:P}), $\tilde T$  (Proposition \ref{prop:tildeT}), and $\tilde S$ \eqref{system}.

\subsection*{Graded Boltzmann damping}

Let $T_b=-a\Div b\nabla$, where $b\in C^{\infty}(\overline\Omega)$ is non-negative and $\Omega\subset\R^n$ has smooth boundary. Define
\[
	T_b=bT_a+L,\quad T_a=-a\Delta, \quad L=-a\sum_{j=1}^n\frac{\partial b}{\partial x_j}\frac{\partial}{\partial x_j},
\]	
with $\dom T_a=H^2(\Omega)\cap H_0^1(\Omega)$,  $\dom L=H_0^1(\Omega)$. From Theorem \ref{th:essspec} follows
\[
	\sigma_{\text{ess}}(T)=\overline{\{\lambda\in\C\,:\, f(\lambda,x)=0\,\, \text{for some}\,\, x\in\bar\Omega\}}.
\]
Example. Take, $N=1$, $a=a_1=b_1=1$, $\Omega=(0,1)^2$, and 
\[
b(x_1,x_2)=b_{\text{max}}-2(b_{\text{max}}-b_{\text{min}})[(x_1-\frac{1}{2})^2+(x_2-\frac{1}{2})^2],	
\] 
with $b_{\text{max}}<1$. Then Theorem \ref{th:essspec} implies
\[
\sigma_{\text{ess}}(T)=-1+b(\bar\Omega)=[-1+b_{\text{min}},-1+b_{\text{max}}].
\]
Note that $c_1=-1+b_{\text{max}}$. Take $T_a$ as in \eqref{eq:specTa}, $b_{\text{min}}=0.5$, and $b_{\text{max}}=0.75$. Then
$\sigma_{\text{ess}}(T)=[-0.5,-0.25]$ and
\[
c_0=\min_{\lambda\in\R}\left\{g(\lambda)\geq  2\pi^2\right\}\approx -0.506413. 
\]
From Lemma \ref{lemOne:ess} we obtain $W_{\alpha,\beta}(T)=S_0\cup S_+\cup S_{-}$ for 
\begin{align*} 
S_0	&= [-0.506,-0.25], \\ 
S_+ &=  \{s\in\C\,:\, \text{Re}\, s\in [-0.375,-0.246],\,\text{Im}\, s\geq 4.383\}, \\
S_-	&=\{s\in\C\,:\, \text{Re}\, s\in [-0.375,-0.246],\, \text{Im}\, s\leq -4.383\}.
\end{align*}
In the general case with $N$ terms, it follows from Theorem \ref{th:essspec} that the essential spectrum consists of at most $N$ intervals. Take for example $N=2$, $a_2=0.2$, $b_2=1.5$, and $a=a_1=b_1=1$. Then, 
\[
	\sigma_{\text{ess}}(T)=[-1.41,-1.42]\cup [-0.40,-0.067],
\]
when $b_{\text{min}}=0.5$, $b_{\text{max}}=0.75$.



\vspace{3.5mm}

\bibliographystyle{abbrv}
\bibliography{references2020,TotalVR2012}

\end{document}